\documentclass[a4paper]{article}
\usepackage{a4wide}
\usepackage{graphicx,xcolor}
\graphicspath{{./images/}}
\usepackage{caption}
\usepackage{subcaption}
\usepackage{empheq}
\usepackage{amsthm}
\usepackage{amssymb, latexsym, mathrsfs}
\usepackage{dsfont}
\usepackage{enumerate}
\usepackage{algorithm}
\usepackage{color}
\usepackage{transparent}
\usepackage{url}
\usepackage[affil-it]{authblk}
\usepackage{booktabs}

\usepackage[utf8]{inputenc}
\usepackage[T1]{fontenc}
\usepackage{lmodern}

\usepackage[american,cuteinductors,smartlabels]{circuitikz}
\usepackage{tikz}
\usepackage{schemabloc}
\usetikzlibrary{shapes,arrows}

\theoremstyle{plain}
\newtheorem{theorem}{Theorem}
\newtheorem{assum}{Assumption}
\newtheorem{prbl}{Problem}

\newtheorem{rmk}{Remark}
\newtheorem{prp}{Proposition}







\def\Rset{\mathbb{R}}

\newcommand{\AAA}{{\mathcal A}}

\newcommand{\DD}{{\mathcal D}}

\newcommand{\FF}{{\mathcal F}}
\newcommand{\GG}{{\mathcal G}}

\newcommand{\KK}{{\mathcal K}}
\newcommand{\LL}{{\mathcal L}}

\newcommand{\OO}{{\mathcal O}}

\newcommand{\QQ}{{\mathcal Q}}

\newcommand{\YY}{{\mathcal Y}}

\newcommand{\hAAA}{\hat{\mathcal{A}}}


\newcommand{\tQQ}{\tilde{\mathcal{Q}}}



\newcommand{\diag}{\mbox{diag}}




\newcommand{\matrice}[2]{\left[\ba{#1} #2 \ea\right]}

\newcommand{\ba}[1]{\begin{array}{#1}}
\newcommand{\ea}{\end{array}}

  \begin{document}

  	\title{A scalable line-independent design algorithm for voltage and frequency control in AC islanded microgrids}
  		
  		\author[1]{Michele Tucci%
  		   \thanks{Electronic address:
  			\texttt{michele.tucci02@universitadipavia.it}}}
  		\author[2]{Giancarlo Ferrari-Trecate%
       		\thanks{Electronic address: \texttt{giancarlo.ferraritrecate@epfl.ch}; Corresponding author}}
		\affil[1]{Dipartimento di Ingegneria Industriale e
			dell'Informazione\\Universit\`a degli Studi di Pavia, Italy}
  		\affil[3]{Automatic Control Laboratory, \'Ecole Polytechnique F\'ed\'erale de Lausanne (EPFL), Switzerland}
  		     \date{\textbf{Technical Report}\\ October, 2018}

  		\maketitle
  		
  		\begin{abstract}
  		We propose a decentralized control synthesis procedure for stabilizing
  		voltage and frequency in AC Islanded microGrids (ImGs) composed of Distributed
  		Generation Units (DGUs) and loads interconnected through power lines. The presented approach
  		enables Plug-and-Play (PnP) operations, meaning that DGUs
  		can be added or removed without compromising the overall ImG
  		stability. The main feature of our approach is that the proposed design algorithm is line-independent. This implies that (i) the synthesis of
  		each local controller requires only the parameters of the corresponding DGU and not the model of power
  		lines connecting neighboring DGUs, and (ii) whenever a new DGU
  		is plugged in, DGUs physically coupled with it do not have to retune
  		their regulators because of the new power line connected to them. Moreover, we formally prove that stabilizing local controllers can be always computed, independently of the electrical parameters. Theoretical results are validated by
  		simulating in PSCAD the behavior of a 10-DGUs ImG.
  		
  	\end{abstract}

\newpage

\section{Introduction}
\label{sec:intro}
Voltage and frequency stability is a central problem in low-voltage AC
Islanded microGrids (ImGs) and, in the recent years, it
has received great attention within the control and the power
electronics communities \cite{Guerrero2013}. In
absence of a connection to the main grid (which acts as
an infinite power source and as a master clock for the ImG frequency),
voltage and frequency must be governed by the local
controllers of the Voltage Source Converters (VSCs) interfacing power
sources with the ImG. Each controlled VSC, together with its power
supply, forms a Distributed Generation Unit (DGU) connected to loads
and other DGUs through power lines. Voltage and frequency control can be then formulated as the problem of designing decentralized regulators guaranteeing collective ImG stability in spite of the electrical coupling between DGUs. Approaches to the decentralized control of ImGs can be divided
into two main classes. The first one embraces droop controllers \cite{Guerrero2013}, which mimic standard regulators for power
networks with inertial generators. Droop controllers admit a simple
implementation and do not require synchronized clocks for the
computation of the control signals. However, they can generate
frequency deviations, whose compensation calls for the use of secondary
distributed controllers \cite{Guerrero2013}. Stability
properties of droop-controlled microgrids have been analyzed in
\cite{schiffer2014conditions,Simpson-Porco2013} under simplified models of the DGU dynamics. The
second class of controllers comprises solutions based on
approaches developed within the field of decentralized control \cite{7172122,Etemadi2012a,Babazadeh2013,7062912,Riverso_TSG}. If,
on the one hand, they require controller clocks to be synchronized with
sufficient precision (through, e.g., GPS or communication networks
\cite{IEEE2017}), on the other hand, they enjoy built-in stability and robustness properties.

\indent Control design algorithms for ImGs can be also categorized according to
their level of modularity. Notably, the necessity to address the growing demand for flexible and resizable ImG structures \cite{kumagai2014rise} (where DGUs and loads can enter/leave over time), calls for a control architecture that can be easily updated when the ImG topology changes. Approaches of
this kind have been often termed Plug-and-Play (PnP)
\cite{7511679,tucci2015decentralized,Riverso_TSG,7039383,7040312}. In particular, in \cite{Riverso_TSG}, PnP means that (i) the
computation of a local controller for a DGU requires only the DGU model and the models of power
lines connected to it, (ii) the design of a local controller preserving
voltage stability in the whole ImG amounts to an optimization
problem. Note that, in view of (i), no global model of the ImG is
needed in control design. Moreover, the plug-in of a DGU requires to
update controllers of neighboring DGUs, at most. In addition, from
(ii), the plug-in of a DGU can be automatically denied if stabilizing
controllers for the DGU and its neighbors cannot be found. Approaches
to the design of distributed regulators with similar PnP features for
large-scale systems have been proposed in \cite{lucia2015contract,bodenburg2015plug,bendtsen2013plug,Riverso2013c}.

In this paper, we develop a PnP control design method that, differently
from \cite{Riverso_TSG}, does not require knowledge of power lines whose parameters are often uncertain; the only global quantity used in the synthesis algorithm is a scalar parameter. We do not assume either to know bounds on electrical coupling parameters, as done in \cite{7511679}. These simplifications are desirable for several reasons. First, the addition/removal of a DGU does not require to update any existing controller in the ImG. Indeed, plugging in/out operations do add/remove lines connected to neighboring DGUs, but DGU controllers are line-independent. Another feature of the proposed control design procedure is that DGUs with identical electrical parameters can be equipped with the same regulator, which can be computed off-line only once. Therefore, for ImGs using a limited set of VSC models, no control synthesis is required at the plug-in/-out time of a new DGU. In addition, while the PnP design in \cite{Riverso_TSG} is dependent on a global tuning parameter, which must be sufficiently small for ensuring collective ImG stability, here this constraint is removed. Indeed, we propose a different proof of voltage and frequency stability. Notably, we first exploit the fact that DGU interactions can be represented by the admittance matrix of the electric graph (which has a Laplacian structure) for guaranteeing the decrease of a separable Lyapunov function along state trajectories. Then, we complement this result with the application of the LaSalle invariance principle.

Another important feature of the control design procedure is that local stabilizing regulators always exist, independently of the electrical parameters of the DGUs, and they can be computed by solving Linear Matrix Inequality (LMI)  problems.

The approach taken in this paper share similarities with the one
in \cite{Tucci2016independent}, where DC mGs
are considered and a line-independent variant of the PnP design
algorithm in \cite{tucci2015decentralized} has been proposed. There
are, however, fundamental differences. First, in the AC case, one must handle
three-phase balanced signals or, in an equivalent way, their $dq$
representation \cite{schiffer2016survey}. This makes stability analysis more complex and, differently from 
\cite{Tucci2016independent}, our rationale hinges on a suitable 
reparametrization of local controllers and Lyapunov functions.
Second, compared to \cite{Tucci2016independent} the LMIs associated to local control design involve a different set of optimization variables and are guaranteed to be feasible.

The paper is structured as follows. The ImG model and the local control architecture is introduced in Section \ref{sec:ImG_model}. In Section \ref{sec:design}, we (i) derive local design conditions implying asymptotic stability of the whole ImG, (ii) present the algorithm for synthesizing line-independent PnP controllers along with the main stability theorem, and (iii) formulate control design through LMIs problems and discuss plug-in and -out operations. 
Simulation results using a 10-DGU ImG with linear and nonlinear loads are described in Section \ref{sec:simulations}. 
We also consider the case of shifts in the individual DGU clocks, showing that they can affect performance of the ImG but not stability.
A preliminary version of this work will be presented at the 20th IFAC World Congress. The design procedure in the conference version, however, is not guaranteed to be always feasible and it includes an additional nonlinear constraints. Moreover, differently from the conference version,  the present paper includes the proofs of Propositions 1-2, as well as additional results (i.e. Propositions 3 and 4) which are exploited in the proof of Theorem 1 (omitted in the conference paper). Finally, the present work includes more comprehensive simulation results.

\noindent \textbf{Notation and basic definitions.} The
$N$ by $N$ identity and null matrices are denoted, respectively, with $I_{N}$
and $\mathbf{0}_N$, while we use $\mathbf{0}$ to
indicate a null matrix of appropriate size. Let
$A\in\mathbb{R}^{n\times m}$ be a
matrix inducing the linear map $A:\mathbb{R}^m\rightarrow \mathbb{R}^n$. The \textit{image} and the \textit{nullspace} (or
\textit{kernel}) of $A$ are indicated with $\text{Im}(A)$ and
$\text{Ker}(A)$, respectively. Consider the subspace $B\subseteq\mathbb{R}^m$: with $A \big|_{B}$ we denote the
restriction of the map $A$ to the domain $B$. The
symbol $\oplus$ refers to the sum of subspaces that are orthogonal (also called \textit{orthogonal direct sum}).

\section{Microgrid model}
\label{sec:ImG_model}
\label{sec:el_model}
In this Section, we present the electrical model of the ImG we
used. We assume three-phase electrical signals without zero-sequence
components and balanced network parameters (see \cite{schiffer2016survey} for basic
definitions about AC three-phase signals).

The single-phase equivalent scheme of DGU $i$ is shown in the left 
dashed frame of Figure \ref{fig:ctrl_part}. In particular, we have a DC
voltage source for modeling a generic renewable resource and
a VSC is controlled in order to supply a
local load, connected to the Point of Common Coupling (PCC) through an
$RLC$ filter. 
We highlight that, as shown in \cite{dorfler2013kron,TucciFloriduz_Kron}, also more general
interconnections of loads and DGUs can be mapped into this setting by means of a network
reduction method known as Kron reduction.
We further assume that
loads $I_{Li}$ are unknown and act as current
disturbances \cite{Babazadeh2013}. More in general, we can consider an ImG composed of $N$ DGUs and
define the set $\mathcal{D}=\{1,\dots,N\}$. Then, we (i) call two DGUs
neighbors if there is a
power line connecting their PCCs, (ii) denote with
$\mathcal{N}_i\subset\mathcal{D}$ the subset of neighbors of DGU
$i$, and (iii) observe that the neighboring relation is
symmetric (i.e. $j\in\mathcal{N}_i$ implies
$i\in\mathcal{N}_j$). We indicate with $\mathcal{G}_{el}$
the undirected electric graph induced by the
neighboring relation over the node set $\mathcal{D}$.

Let $\omega_0$ be the reference network frequency. As in \cite{Riverso_TSG,TucciFloriduz_Kron}, the $i$-th DGU model in $dq$ reference frame
(rotating with speed $\omega_0$) is

\begin{equation}
\label{DGUeqx}
\text{DGU}~i:\hspace{-4mm}\quad\left\lbrace
\begin{aligned}
\frac{\mathrm d}{\mathrm dt} V_i^{dq}&=-\mathrm i \omega_0
V_i^{dq}+
\frac{I_{ti}^{dq}}{C_{ti}}-\frac{I_{Li}^{dq}}{C_{ti}} +\sum_{j\in\mathcal{N}_i} \frac{1}{C_{ti}}\left(\frac{V_j^{dq}}{R_{ij}+\mathrm i  \omega_0 L_{ij}}-\frac{V_i^{dq}}{R_{ij}+\mathrm i  \omega_0 L_{ij}}\right) \\
\frac{\mathrm d}{\mathrm dt}I_i^{dq} &=-\left( \frac{R_{ti}}{L_{ti}}+\mathrm i \omega_0 \right) I_{ti}^{dq} - \frac{V_i^{dq}}{L_{ti}} + \frac{V_{ti}^{dq}}{L_{ti}} 
\end{aligned}
\right.
\end{equation}
\normalsize
where quantities $V^{dq}_i$ and $I^{dq}_{ti}$ represent the $i$-th PCC voltage and
filter current, respectively, $V^{dq}_{ti}$ is the command input to the corresponding
VSC, while $R_{ti}$, $L_{ti}$ and $C_{ti}$ are the converter parameters. Moreover, $V^{dq}_{j}$ is the voltage at the
PCC of each neighboring DGU $j\in\mathcal{N}_i$ and $R_{ij}$ and $L_{ij}$ are,
respectively, the resistance
and impedance of the three-phase power line connecting DGUs $i$ and $j$. 
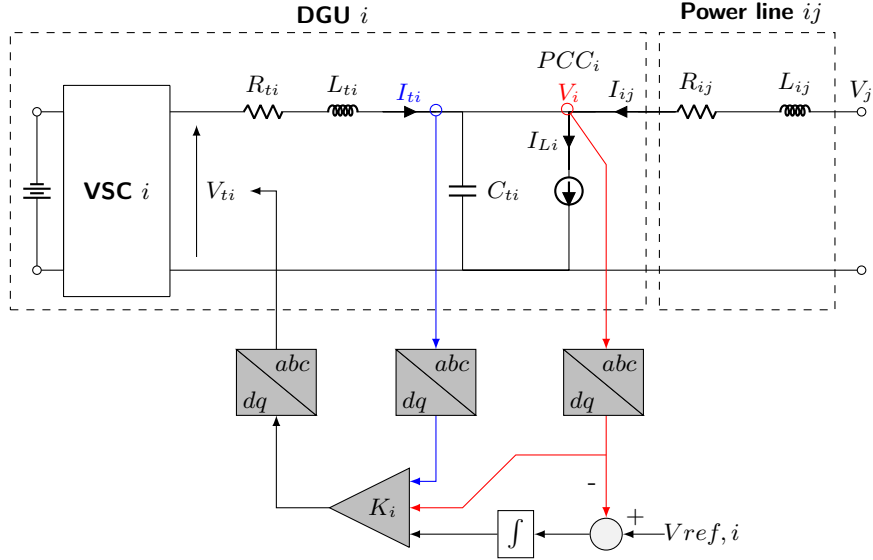
\begin{figure}[!htb]
	\centering
	\ctikzset{bipoles/length=.6cm}
	\tikzstyle{every node}=[font=\sffamily\small]
	\begin{circuitikz}[scale=0.7]
		\draw (1,1)  to [battery, o-o](1,4)
		to [short](1.5,4)
		to [short](1.5,4.5)
		to [short](3.5,4.5)
		to [short](3.5,0.5)
		to [short](1.5,0.5)
		to [short](1.5,4)
		to [short](1.5,1)
		to [short](1,1);
		\node at (2.5,2.5){ \textbf{VSC $i$}};
		\draw[-latex] (4,1.25) -- (4,3.75)node[midway,right]{$V_{ti}$};
		\draw (3.5,4) to [short](4,4)
		to [short](4.5,4)
		to [R=$R_{ti}$] (6,4)
		to [L=$L_{ti}$] (7.5,4)
		to [short, i=$\textcolor{blue}{I _{ti}}$, -] (8.5,4)
		to [short](9,4) 
		to [C, l=$C_{ti}$, -] (9,1)
		to [short](4,1)
		to [short](3.5,1);
		\draw (12.3,4)  to [R=$R_{ij}$] (14.5,4)
		to [L=$L_{ij}$] (16,4)
		to [short, -o] (16.5,4) node[anchor=north,above]{$V_j$};
		\draw (8.5,4) to (11,4) 
		to [ I ] (11 ,1)
		to [short] (9,1)
		to [short, -o] (16.5,1); 
		\draw (11,4) to [short](11.5,4);
		\draw (11,4) node[anchor=north, above]{$\textcolor{red}{V_i}$}  to [short, i_=$I_{Li}$](11,2.9);
		\node at (11,4.6)[anchor=north, above]{$PCC_i$} ;
		\draw (11,4) to [short,i<=$I_{ij}$] (13,4) -- (13,4); 
		\draw[black, dashed] (.5,.25) -- (12.45,.25) -- (12.45,5.5) -- (.5,5.5)node[sloped, midway, above]{{ \textbf{DGU $i$}}}  -- (.5,.25);
		\draw[black, dashed] (12.7,.25) -- (16,.25) -- (16,5.5) -- (12.7,5.5)node[sloped, midway, above]{{ \textbf{Power line $ij$}}}  -- (12.7,.25);
		\draw[red,o-latex] (10.9,4.15) -- (11.7,3) to (11.7,-.5);
		\draw[red,latex-](8,-3.5)-- (9,-3.5) --  (10,-2.5)-- (11.7,-2.5);
		\draw[blue,o-latex] (8.5,4.15) to (8.5,-0.5);
		\draw[blue, -latex] (8.5,-1.75) -- (8.5,-3) to (8,-3);
		\draw (10,-4) node(a) [black, draw,fill=white!20] {$\normalsize{\int}$};
		\draw[-latex] (a.west) to (8,-4);
		\draw (11.7,-4) node(b)[ circle, draw=black, minimum size=12pt, fill=lightgray!20]{};
		\draw[red, -latex] (11.7,-1.75)  -- (b.north) node[pos=0.7, left]{\textcolor{black}{\normalsize{-}}};
		\draw[latex-] (a.east) -- (b.west);
		\draw[-latex] (12.8,-4)  -- (b.east) node[pos=0.7,above]{{+}} node[pos=0.25,right]{$V{ref,i}$};
		\draw[fill=lightgray] (4.75,-0.5)    -- (6.25,-1.75) -- (4.75,-1.75) -- (4.75,-0.5);
		\draw[fill=lightgray] (6.25,-1.75) -- (6.25,-0.5)    -- (4.75,-0.5) -- (6.25,-1.75);
		\node at (5.1,-1.5) {$dq$}; 
		\node at (5.8,-.75) {$abc$}; 
		\draw[fill=lightgray] (7.75,-0.5)    -- (9.25,-1.75) -- (7.75,-1.75) -- (7.75,-0.5);
		\draw[fill=lightgray] (9.25,-1.75) -- (9.25,-0.5)    -- (7.75,-0.5) -- (9.25,-1.75);
		\node at (8.1,-1.5) {$dq$};
		\node at (8.8,-.75) {$abc$}; 
		\draw[fill=lightgray] (10.9,-0.5)    -- (12.4,-1.75) -- (10.9,-1.75) -- (10.9,-0.5);
		\draw[fill=lightgray] (12.4,-1.75) -- (12.4,-0.5)    -- (10.9,-0.5) -- (12.4,-1.75);
		\node at (11.25,-1.5) {$dq$}; 
		\node at (11.95,-.75) {$abc$}; 
		\draw[fill=lightgray] (8,-4.25) -- (8,-2.75) -- (6.5,-3.5) -- (8,-4.25);
		\node at (7.5,-3.5) {$K_i$};
		\draw[-latex] (6.5,-3.5) -- (5.5,-3.5) -- (5.5,-1.75);
		\draw[-latex] (5.5,-.5) -- (5.5,2.5) -- (5,2.5);				   
	\end{circuitikz}
	\caption{Electrical scheme of DGU $i$, power line $ij$, and local
		PnP voltage and frequency controller.}
	\label{fig:ctrl_part}
\end{figure}
Moreover, we recall that \eqref{DGUeqx} has been derived by exploiting
Quasi-Stationary Line (QSL) approximations of power lines \cite{Riverso_TSG}.

Next, we derive the state-space model of the ImG with
dynamics \eqref{DGUeqx}. Notably, we can write
\small
\begin{equation}
\label{eq:subsys_ss}
\text{ ${\Sigma}_{[i]}^{DGU}:$}\left\lbrace
\begin{aligned}
\dot{x}_{[i]}(t) &= A_{ii}{x}_{[i]}(t) +
B_{i}{u}_{[i]}(t)+M_{i}{d}_{[i]}(t)+\xi_{[i]}(t)\\
{y}_{[i]}(t)       &= C_{i}{x}_{[i]}(t)\\
{z}_{[i]}(t)       &= H_{i}{y}_{[i]}(t)\\
\end{aligned}
\right.
\end{equation}
\normalsize
where ${x}_{[i]}=[V_{i}^d, V_{i}^q, I_{ti}^d, I_{ti}^q]^T$ is the state,
${u}_{[i]} = [V_{ti}^d, V_{ti}^q]^T$ the control input,
${d}_{[i]} = [I_{Li}^d,I_{Li}^q]^T$ the exogenous input and
${z}_{[i]} = [V_{i}^d,V_{i}^q]^T$ the controlled variable of the
system. Moreover, we assume that the output $y_{[i]}=
{x}_{[i]}$ is measurable, and let
term $\xi_{[i]}=\sum_{j\in\mathcal{N}_i} A_{ij}( x_{[j]}- x_{[i]})$ accounts
for the coupling with each DGU $j\in\mathcal{N}_i$. 
We highlight that the provided model is identical to the one in
\cite{Riverso_TSG}, except that the
coupling terms have been embedded in the contribution ${\xi}_{[i]}$.
As regards the matrices in \eqref{eq:subsys_ss}, they have the following form:

\begin{equation}
\label{eq:matrices_1}
\renewcommand\arraystretch{2}
A_{ii}=\begin{bmatrix}
0 & \omega_0 & \frac{1}{C_{ti}} &0\\
-\omega_0 & 0 & 0 & \frac{1}{C_{ti}}\\
-\frac{1}{L_{ti}} & 0 & -\frac{R_{ti}}{L_{ti}} &
\omega_0\\
0& -\frac{1}{L_{ti}} & -\omega_0 & -\frac{R_{ti}}{L_{ti}} \\
\end{bmatrix}, A_{ij}=
\frac{1}{C_{ti}}  \begin{bmatrix}
\frac{R_{ij}}{Z^2_{ij}}  & \frac{X_{ij}}{Z^2_{ij}}&\mathbf{0} \\
-\frac{X_{ij}}{Z^2_{ij}}& \frac{R_{ij}}{Z^2_{ij}}  & \mathbf{0}\\
\mathbf{0} & \mathbf{0} & \mathbf{0}\\
\end{bmatrix},
\end{equation}
\begin{equation}
\label{eq:matrices_2}B_{i}=\begin{bmatrix}
\mathbf{0} &\mathbf{0}\\
\frac{1}{L_{ti}} & 0\\
0& \frac{1}{L_{ti}}
\end{bmatrix}, 
M_{i}=\begin{bmatrix}
-\frac{1}{C_{ti}} & 0\\
0 &  -\frac{1}{C_{ti}}\\
\mathbf{0} &\mathbf{0}\\
\end{bmatrix},
C_{i}= I_4,
H_{i}=\begin{bmatrix}
1 & 0 & \mathbf{0}\\
0 & 1 & \mathbf{0}\\
\end{bmatrix},
\end{equation}
\normalsize
where $X_{ij}=\omega_0L_{ij}$ and $Z_{ij }=|R_{ij}+\mathrm i X_{ij}|$.\\
At this point, we can write the ImG model as follows
\begin{equation}
\label{eq:stdformA}
\left\lbrace
\begin{aligned}
\mathbf{\dot{x}}(t) &= \mathbf{Ax}(t) + \mathbf{Bu}(t)+ \mathbf{Md}(t)\\
\mathbf{y}(t)       &= \mathbf{Cx}(t)\\
\mathbf{z}(t)       &= \mathbf{Hy}(t)
\end{aligned}
\right.
\end{equation}
where $\mathbf x = ( x_{[1]},\ldots, x_{[N]})\in\mathbb{R}^{4N}$, $\mathbf u = ( u_{[1]},\ldots, u_{[N]})\in\mathbb{R}^{2N}$, $\mathbf d = ( d_{[1]},\ldots, d_{[N]})\in\mathbb{R}^{2N}$, $\mathbf y = ( y_{[1]},\ldots, y_{[N]})\in\mathbb{R}^{4N}$, $\mathbf z = ( z_{[1]},\ldots, z_{[N]})\in\mathbb{R}^{2N}$. Matrices $\mathbf{A}$,
$\mathbf{B}$, $\mathbf{M}$, $\mathbf C$ and $\mathbf H$, which can be easily derived from \eqref{eq:matrices_1}-\eqref{eq:matrices_2}, are shown in Appendix A.3 of \cite{Riverso2014c}. 
  \section{Design of stabilizing controllers}
  \label{sec:design}
  As in \cite{Riverso_TSG}, in
  order to track constant references $\mathbf{z_{ref}}(t)=\mathbf{\bar
  	z_{ref}}$, the ImG model is augmented with integrators
  (see Figure
  \ref{fig:ctrl_part}, where $z_{ref_{[i]}}=V_{ref,i}$). Hence, we first write the dynamics of the integrators as
  \begin{equation*}
  \begin{aligned}
  {\dot{v}}_{[i]}(t) = {e}_{[i]}(t) &= {z_{ref}}_{[i]}(t)-{z}_{[i]}(t) \\
  &= {z_{ref}}_{[i]}(t)-H_{[i]}C_{[i]}{x}_{[i]}(t),
  \end{aligned}
  \label{eq:intdynamics}
  \end{equation*}
  and then derive the augmented model of the DGU
  \begin{equation}
  \label{eq:modelDGUgen-aug}
  {\hat{\Sigma}}_{[i]}^{DGU}\hspace{-1.5mm}:\hspace{-1mm}
  \left\lbrace
  \begin{aligned}
  {\dot{\hat{x}}}_{[i]}(t) &= \hat{A}_{ii}{\hat{x}}_{[i]}(t) + \hat{B}_{i}{u}_{[i]}(t)+\hat{M}_{i}{\hat{d}}_{[i]}(t)+{\hat\xi}_{[i]}(t)\\
  {\hat{y}}_{[i]}(t)       &= \hat{C}_{i}{\hat{x}}_{[i]}(t)\\
  {z}_{[i]}(t)       &= \hat{H}_{i}{\hat{y}}_{[i]}(t)
  \end{aligned}
  \right.
  \end{equation}
  where ${\hat{x}}_{[i]}=[{x^T}_{[i]},v_{[i]}^T]^T\in\mathbb{R}^6$ is the state,
  ${\hat{y}}_{[i]}={\hat{x}}_{[i]}$
  the measurable output,
  ${\hat{d}}_{[i]}=[{d}_{[i]}^T,{z_{ref}}_{[i]}^T]^T\in\mathbb{R}^4$
  the exogenous signals and
  ${\hat\xi}_{[i]}=\sum_{j\in\mathcal{N}_i}\hat{A}_{ij}({\hat{x}}_{[j]}-{\hat{x}}_{[i]})$. Moreover,
  matrices
  in \eqref{eq:modelDGUgen-aug} have the form

  \begin{equation*}
  \begin{aligned}
  \hat{A}_{ii} &= 
  \left[ \begin{array}{c|cc}
  A_{ii} & \mathbf{0} \\
  \hline
  -H_{i}C_{i}& \mathbf{0}
  \end{array}\right]=
  \left[ \begin{array}{cc|c}
  \hat{\mathcal{A}}_{11,i} & \frac{1}{C_{ti}}I_2 & \mathbf{0}_2 \\
  -\frac{1}{L_{ti}}I_2 & \hat{\mathcal{A}}_{22,i} & \mathbf{0}_2\\
  \hline
  -I_2 &\mathbf{0}_2 & \mathbf{0}_2 \\
  \end{array}\right],
  \\
  \hat{\mathcal{A}}_{11,i}&=\omega_0\begin{bmatrix}
  0 & 1\\
  -1 & 0
  \end{bmatrix}, 
  \hspace{2mm}
  \hat{\mathcal{A}}_{22,i}=\omega_0\begin{bmatrix}
  -\frac{R_{ti}}{L_{ti}} & \omega_0\\
  -\omega_0 & -\frac{R_{ti}}{L_{ti}}
  \end{bmatrix},
  \\
  \hat{A}_{ij}&=\begin{bmatrix}
  A_{ij} &\mathbf{0}\\
  \mathbf{0}&\mathbf{0}
  \end{bmatrix},\hspace{2mm}
  \hat{B}_{i}=\begin{bmatrix}
  B_{i}\\
  \mathbf{0}
  \end{bmatrix},\\
  \hat{C}_{i}&=\begin{bmatrix}
  C_{i} & \mathbf{0}\\
  \mathbf{0} & I_2
  \end{bmatrix},\hspace{2mm}
  \hat{M}_{i}=\begin{bmatrix}
  M_{i} & \mathbf{0} \\
  \mathbf{0} & I_2
  \end{bmatrix},\hspace{2mm}
  \hat{H}_{i}=\begin{bmatrix}
  H_{i} & \mathbf{0}
  \end{bmatrix}.
  \end{aligned}
  \end{equation*}   
  \normalsize 
  
  We highlight that, since all the electrical parameters are positive, the pair
  $(\hat{A}_{ii},\hat{B}_i)$ is controllable (see Proposition 2 in
  \cite{Riverso_TSG}). Therefore, system \eqref{eq:modelDGUgen-aug} can be
  stabilized.
  
  As in \cite{Riverso_TSG}, the overall augmented system is obtained from \eqref{eq:modelDGUgen-aug} as
  \begin{equation}
  \label{eq:sysaugoverall}
  \left\lbrace
  \begin{aligned}
  \mathbf{\dot{\hat{x}}}(t) &= \mathbf{\hat{A}\hat{x}}(t) + \mathbf{\hat{B}u}(t)+ \mathbf{\hat{M}\hat{d}}(t)\\
  \mathbf{\hat{y}}(t)       &= \mathbf{\hat{C}\hat{x}}(t)\\
  \mathbf{z}(t)             &= \mathbf{\hat{H}\hat{y}}(t)
  \end{aligned}
  \right.
  \end{equation}
  where $\mathbf{\hat{x}}$, $\mathbf{\hat{y}}$ and
  $\mathbf{\hat{d}}$ collect variables ${\hat{x}}_{[i]}$,
  ${\hat{y}}_{[i]}$ and ${\hat{d}}_{[i]}$ respectively,
  and matrices $\mathbf{\hat{A}}, \mathbf{\hat{B}},
  \mathbf{\hat{C}}, \mathbf{\hat{M}}$ and
  $\mathbf{\hat{H}}$ are derived directly from the
  systems \eqref{eq:modelDGUgen-aug}. Finally, we equip each DGU ${\hat{\Sigma}}_{[i]}^{DGU}$ with the following state-feedback controller
  \begin{equation*}
  \label{eq:ctrldec}
  {\mathcal{C}}_{[i]}:\qquad {u}_{[i]}(t)=K_{i}{\hat{y}}_{[i]}(t)=K_{i}{\hat{x}}_{[i]}(t),
  \end{equation*}
  where $K_{i}=\mathbb{R}^{2\times 6}$. Note that the computation of
  ${u}_{[i]}$ requires the state of
  ${\hat{\Sigma}}_{[i]}^{DGU}$ only. Hence, we have that the overall control architecture is
  decentralized.
  \subsection{Local conditions implying ImG stability}
  \label{subs:local_conditions}
  If coupling terms $\hat{\xi}_{i}(t)$ are not
  present, the asymptotic stability of the overall ImG can be ensured by simply stabilizing each closed-loop subsystem 
  \begin{equation}
  \label{eq:modelDGUgen-aug-closed}
  \dot{\hat{x}}_{[i]}(t) =\underbrace{(\hat{A}_{ii}+\hat{B}_iK_i)}_{F_i}\hat{x}_{[i]}(t)+\hat{M}_{i}\hat{d}_{[i]},
  \end{equation}
  where, by construction, matrix $F_{i}$ has the following structure
  {\begin{equation}
  	\renewcommand\arraystretch{1.2}
  	\label{eq:Fi}
  	F_{i}=
  	\left[ \begin{array}{ccc}
  	\mathcal{F}_{11,i} & \mathcal{F}_{12,i} & \mathbf{0}_2 \\
  	\mathcal{F}_{21,i} & \mathcal{F}_{22,i} & \mathcal{F}_{23,i}\\
  	-I_2 &\mathbf{0}_2 & \mathbf{0}_2 \\
  	\end{array}\right]. 
  	\end{equation}}\normalsize
  One also has
  \begin{equation}
  \label{eqn:Fi_subplots}
  \mathcal{F}_{11,i}=\hat{\AAA}_{11,i}\hspace{2mm}\text{and}\hspace{2mm}
  \mathcal{F}_{12,i}=\frac{1}{C_{ti}} I_2.
  \end{equation}
  According to Lyapunov theory, system \eqref{eq:modelDGUgen-aug-closed}
  is asymptotically stable if there exists a Lyapunov function
  $\mathcal{V}_i(\hat{x}_{[i]})=\hat{x} _{[i]} ^T P_i\hat{x}_{[i]}$,
  with $P_{i}=P^T_{i}>0$, such that
  
  \begin{equation}
  \label{eq:Lyapeqnith}
  Q_i = (\hat{A}_{ii}+\hat{B}_iK_i)^TP_i + P_i(\hat{A}_{ii}+\hat{B}_iK_i)  = F_{i}^T
  P_{i}+P_{i}F_{i}
  \end{equation}
  \normalsize
  is negative definite.
  
  In a real ImG, however, electric interactions between subsystems
  exist. For this reason, in the following we present the conditions which allows us to guarantee collective ImG stability by designing totally decentralized controllers, even in presence of couplings between DGUs.
  \begin{assum} 
  	\label{ass:ctrl}
  	Each matrix gain $K_{i}$, $i\in\mathcal{D}$
  	is designed using $P_i$ in \eqref{eq:Lyapeqnith} with the following structure
  	\small
  	\begin{equation}
  	\label{eq:pstruct}
  	P_{i}=\left[ \begin{array}{c|c}
  	\eta_i I_{2}& \mathbf{0}_{2} \\
  	\hline
  	\mathbf{0}_{2}  & P_{22,i}\\
  	\end{array}\right],
  	\end{equation}
  	\normalsize
  	where the entries of $P_{22,i}\in\mathbb{R}^{4\times 4}$ are arbitrary and $\eta_i>0$ is a local parameter.
  \end{assum}
  \begin{rmk}
  	\label{rmk:separable}
  	Assumption \ref{ass:ctrl} amounts to use local separable Lyapunov
  	functions in the form $$\mathcal{V}_i( \hat{x}_{[i]})=
  	\eta_i\hat{x}_{[i],1}^T\hat{x}_{[i],1} + \tilde{x}_{[i]}^T\text{
  	}P_{22,i}\tilde{x}_{[i]},$$
  	where $\hat{x}_{[i]}= [\hat{x}_{[i],1}^T,\tilde{x}_{[i]}^T]^T$,
  	$\tilde{x}_{[i]}\in\mathbb{R}^{4}$.
  \end{rmk}
  The second condition regards the parameters $\eta_i$.
  \begin{assum}
  	\label{ass:equal_ratio}
  	Let $\bar\sigma>0$ be a
  	constant parameter, common to
  	all the DGUs. Parameters $\eta_i$ in \eqref{eq:pstruct} are given by
  	\begin{equation*}
  	\label{eq:equal_ratio}
  	\eta_i = \bar\sigma C_{ti}
  	\hspace{7mm} \forall i\in\mathcal{D}.
  	\end{equation*}
  \end{assum}
  The stability analysis continues by showing that, if Assumption~\ref{ass:ctrl} holds, Lyapunov
  theory can certify marginal
  stability (but not asymptotic stability) of \eqref{eq:sysaugoverall}. To this purpose, we provide the
  following Proposition.
  \begin{prp}
  	\label{pr:prop_1}
  	Under Assumption \ref{ass:ctrl}, matrix $Q_i$ in \eqref{eq:Lyapeqnith} cannot be negative
  	definite. Moreover, 
  	\begin{equation}
  	\label{eq:Qi_semidef}
  	Q_i \leq 0
  	\end{equation} 
  	implies that $Q_i$ has the following structure:
  	\small
  	\begin{equation}
  	\renewcommand\arraystretch{1.2}
  	Q_i=
  	\label{eq:qstruct}
  	\left[ \begin{array}{ccc}
  	\mathbf{0}_2 & \mathbf{0}_2 & \mathbf{0}_2 \\
  	\mathbf{0}_2 & \mathcal{Q}_{22,i} & \mathcal{Q}_{23,i}\\
  	\mathbf{0}_2 & \mathcal{Q}_{23,i} & \mathcal{Q}_{33,i}\\
  	\end{array}\right] .
  	\end{equation}
  	\normalsize
  \end{prp}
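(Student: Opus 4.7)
The plan is to exploit the block structures prescribed in \eqref{eq:Fi} and \eqref{eq:pstruct} to evaluate the top-left $2\times 2$ block of $Q_i = F_i^T P_i + P_i F_i$ explicitly. Because the top-left block of $P_i$ is $\eta_i I_{2}$, a scalar multiple of the identity, and because $\mathcal{F}_{11,i}=\hat{\mathcal{A}}_{11,i}=\omega_0\bigl[\begin{array}{cc}0&1\\-1&0\end{array}\bigr]$ is skew-symmetric by \eqref{eqn:Fi_subplots}, the $(1,1)$ block of $Q_i$ collapses to $\eta_i(\mathcal{F}_{11,i}+\mathcal{F}_{11,i}^T)=\mathbf{0}_{2}$. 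This single identity is the engine of the proof: the block pattern imposed by Assumption~\ref{ass:ctrl} was engineered precisely so that the rotational skew-symmetric term in $F_i$ cannot contribute any definite amount to $Q_i$.

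From here the first claim is immediate: choosing any test vector $x=(v^T,\mathbf{0}^T,\mathbf{0}^T)^T$ with $v\in\mathbb{R}^{2}\setminus\{0\}$ gives $x^T Q_i x=0$, which rules out $Q_i<0$. For the second claim, I would invoke the elementary fact that a symmetric negative semidefinite matrix with a vanishing diagonal entry has the corresponding row and column identically zero (seen either by inspecting $2\times 2$ principal minors or, equivalently, by noting that $v^T Q_i v=0$ together with $Q_i\leq 0$ forces $v\in\ker Q_i$). Applying this to each of the canonical basis vectors $e_k\in\mathbb{R}^2$, $k=1,2$, embedded as $(e_k^T,\mathbf{0}^T,\mathbf{0}^T)^T$, the entire first block row and, by symmetry, the first block column of $Q_i$ must vanish; this is exactly the pattern \eqref{eq:qstruct}. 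The remaining blocks $\mathcal{Q}_{22,i},\mathcal{Q}_{23,i},\mathcal{Q}_{33,i}$ are whatever emerges from the straightforward computation with $P_{22,i}$ and the lower rows of $F_i$, and are not further constrained by the present statement.

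The computation itself is mechanical and presents no serious obstacle. The one point that deserves attention is cross-term bookkeeping when expanding $P_iF_i+F_i^T P_i$: one must verify that the off-diagonal blocks of $P_i$ contribute nothing to the $(1,1)$ entry of $Q_i$, which is automatic because the upper-right block of $P_i$ is exactly zero by \eqref{eq:pstruct}, and that the $(1,3)$ contribution from $P_iF_i$ is likewise zero because $F_{13}=\mathbf{0}_2$ in \eqref{eq:Fi}. Once these checks are in place, the two observations above finish both claims.
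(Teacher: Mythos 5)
Your proposal is correct and takes essentially the same route as the paper: you substitute the structures \eqref{eq:Fi} and \eqref{eq:pstruct} into \eqref{eq:Lyapeqnith} to see that the leading $2\times 2$ block of $Q_i$ (in particular its first two diagonal entries) vanishes --- your skew-symmetry observation $\eta_i(\mathcal{F}_{11,i}+\mathcal{F}_{11,i}^T)=\mathbf{0}_2$ is just this computation made explicit --- and then invoke the same elementary fact that a negative semidefinite matrix with a zero diagonal entry must have the corresponding row and column identically zero, which yields \eqref{eq:qstruct}.
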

  \begin{proof}
  	By substituting \eqref{eq:Fi} and \eqref{eq:pstruct} in
  	\eqref{eq:Lyapeqnith}, one gets that the first two diagonal elements of $Q_i$ are zero. This shows that $Q_i$ cannot be negative definite. Moreover, from basic linear
  	algebra, if a negative semidefinite matrix has
  	a zero element on its diagonal, the corresponding row and column
  	have zero entries. Then \eqref{eq:Qi_semidef} implies \eqref{eq:qstruct}. 
  \end{proof} 
  Next, we consider the overall closed-loop ImG model, given by
  \begin{equation}
  \label{eq:sysaugoverallclosed}
  \left\lbrace
  \begin{aligned}
  \mathbf{\dot{\hat{x}}}(t) &= (\mathbf{\hat{A}+\hat{B}K})\mathbf{\hat{x}}(t)+ \mathbf{\hat{M}\hat{d}}(t)\\
  \mathbf{\hat{y}}(t)       &= \mathbf{\hat{C}\hat{x}}(t)\\
  \mathbf{z}(t)       &= \mathbf{\hat{H}\hat{y}}(t)
  \end{aligned}
  \right.
  \end{equation}
  where $\mathbf{K}=\text{diag}(K_{1},\dots,K_{N})$. Being
  $\mathbf{P}=\text{diag}(P_{1},\dots,P_{N})$, the collective
  Lyapunov function is 
  \begin{equation*}
  \label{eq_coll_lyap}
  \mathcal{V}(\mathbf{\hat{x}})=\sum_{i=1}^N\mathcal{V}_i(\hat{x}_{[i]})=\mathbf{\hat{x}}^T\mathbf{P}\mathbf{\hat{x}}.
  \end{equation*}
  Consequently, one has that
  $\dot{\mathcal{V}}(\mathbf{\hat{x}})=
  \mathbf{\hat{x}}^T\mathbf{Q}\mathbf{\hat{x}}$, where
  \begin{equation}
  \nonumber
  \mathbf{Q} = (\mathbf{\hat{A}}+\mathbf{\hat{B}K})^T
  \mathbf{P}+\mathbf{P}(\mathbf{\hat{A}}+\mathbf{\hat{B}K}).
  \end{equation}
  From Proposition \ref{pr:prop_1}, we know that, if Assumption \ref{ass:ctrl} holds, then (i) matrix  $\mathbf{Q}$
  cannot be negative definite, and (ii), at most, one can have
  \begin{equation}
  \label{eq:Lyapeqnoverall}
  \mathbf{Q} \leq 0.
  \end{equation}
  At this point, we notice that, even if \eqref{eq:Qi_semidef} is verified for all $i\in\mathcal{D}$, the
  inequality \eqref{eq:Lyapeqnoverall} might be violated because of the nonzero
  coupling terms $\hat{A}_{ij}$ in matrix $\mathbf{\hat A}$ (an example is provided in Appendix B of \cite{Riverso2014c}).
  However, through the next Proposition, we show that \eqref{eq:Lyapeqnoverall} is always satisfied if also Assumption \ref{ass:equal_ratio} is fulfilled.
  \begin{prp}
  	\label{pr:semidefinite_abc}
  	Under Assumptions \ref{ass:ctrl} and \ref{ass:equal_ratio}, if
  	matrix gains $K_i$ are computed to
  	satisfy \eqref{eq:Qi_semidef} for all $i\in\mathcal{D}$, then \eqref{eq:Lyapeqnoverall} holds.
  \end{prp}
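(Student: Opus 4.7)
The plan is to split $\mathbf{Q}$ into a block-diagonal part coming from the local Lyapunov equations and a coupling part coming from the off-diagonal blocks $\hat{A}_{ij}$ of $\mathbf{\hat{A}}$, and to control each summand separately. Writing $\mathbf{\hat{A}} = \mathbf{\hat{A}}_D + \mathbf{\hat{A}}_C$ with $\mathbf{\hat{A}}_D = \text{diag}(\hat{A}_{11},\ldots,\hat{A}_{NN})$ and $\mathbf{\hat{A}}_C$ collecting the interconnection terms $\hat{\xi}_{[i]} = \sum_{j \in \mathcal{N}_i} \hat{A}_{ij}(\hat{x}_{[j]} - \hat{x}_{[i]})$, one obtains
\[
\mathbf{Q} = \text{diag}(Q_1,\ldots,Q_N) + \Delta\mathbf{Q}, \qquad \Delta\mathbf{Q} := \mathbf{\hat{A}}_C^T \mathbf{P} + \mathbf{P}\,\mathbf{\hat{A}}_C .
\]
Since each $Q_i \leq 0$ by hypothesis via \eqref{eq:Qi_semidef}, the first summand is negative semidefinite, and the whole problem reduces to proving $\Delta\mathbf{Q} \leq 0$.

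To analyze $\Delta\mathbf{Q}$, I would exploit the very sparse structure of the coupling blocks $\hat{A}_{ij}$ in \eqref{eq:matrices_1}: their only nonzero entries lie in the top-left $2 \times 2$ voltage block, equal to $G_{ij} = (1/C_{ti})\,\tilde{G}_{ij}$ with
\[
\tilde{G}_{ij} = \begin{bmatrix} R_{ij}/Z_{ij}^{2} & X_{ij}/Z_{ij}^{2} \\ -X_{ij}/Z_{ij}^{2} & R_{ij}/Z_{ij}^{2}\end{bmatrix},
\]
which is invariant under swapping $i \leftrightarrow j$ (i.e.\ $\tilde{G}_{ij} = \tilde{G}_{ji}$). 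Assumption~\ref{ass:ctrl} guarantees that $P_i$ restricted to the voltage block is $\eta_i I_2$, so the quadratic form $\mathbf{\hat{x}}^T \Delta\mathbf{Q}\,\mathbf{\hat{x}}$ depends only on the voltage part $\mathbf{v} = (V_1^{dq},\ldots,V_N^{dq})$ of the state. A short block computation rewrites it as $-\mathbf{v}^T(\mathbf{P}_V \mathbf{L}_{el} + \mathbf{L}_{el}^T \mathbf{P}_V)\mathbf{v}$, where $\mathbf{P}_V = \text{diag}(\eta_i I_2)$ and $\mathbf{L}_{el}$ is the block matrix with diagonal blocks $\sum_{j \in \mathcal{N}_i} G_{ij}$ and off-diagonal blocks $-G_{ij}$ along the edges of $\mathcal{G}_{el}$.

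The key step is then to invoke Assumption~\ref{ass:equal_ratio}: substituting $\eta_i = \bar\sigma C_{ti}$ cancels the factor $1/C_{ti}$ in $G_{ij}$, so that the $(i,j)$-block of $\mathbf{P}_V \mathbf{L}_{el}$ becomes $-\bar\sigma \tilde{G}_{ij}$, whose symmetrization $-\bar\sigma(\tilde{G}_{ij} + \tilde{G}_{ij}^T)/2 = -\bar\sigma(R_{ij}/Z_{ij}^{2})\,I_2$ annihilates the inductive (antisymmetric) part and leaves only the resistive scalar weight. Performing the analogous computation on the diagonal blocks yields
\[
\mathbf{P}_V \mathbf{L}_{el} + \mathbf{L}_{el}^T \mathbf{P}_V = 2\bar\sigma\, L_R \otimes I_2,
\]
where $L_R$ is the ordinary graph Laplacian of $\mathcal{G}_{el}$ with positive edge weights $R_{ij}/Z_{ij}^{2}$; as such it is positive semidefinite, hence $\Delta\mathbf{Q} \leq 0$, and combining with the block-diagonal summand yields $\mathbf{Q} \leq 0$.

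The hard part is exactly this symmetrization, and it is precisely where Assumption~\ref{ass:equal_ratio} is essential: without the common scaling $\eta_i \propto C_{ti}$, the two off-diagonal contributions $\eta_i G_{ij}$ and $\eta_j G_{ji}^T$ are mismatched, the symmetric part of $\mathbf{P}_V \mathbf{L}_{el}$ cannot be collapsed onto a scalar multiple of a graph Laplacian, and the sign of the coupling contribution cannot be controlled independently of the (uncertain) line parameters $R_{ij}, X_{ij}$.
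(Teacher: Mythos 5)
Your argument is correct and follows essentially the same route as the paper's proof: both reduce the coupling contribution, thanks to the separable $P_i$ of Assumption \ref{ass:ctrl} and the scaling $\eta_i=\bar\sigma C_{ti}$ of Assumption \ref{ass:equal_ratio}, to (minus) a weighted Laplacian of $\mathcal{G}_{el}$ with resistive weights $R_{ij}/Z_{ij}^2$, which is negative semidefinite and is added to $\mathrm{diag}(Q_1,\dots,Q_N)\leq 0$. The only cosmetic differences are that you merge the paper's diagonal correction term $\mathbf{\hat{A}_{\Xi}}$ and off-diagonal term $\mathbf{\hat{A}_C}$ into a single coupling matrix and state the result in the explicit Kronecker form $2\bar\sigma\,L_R\otimes I_2$, where the paper instead identifies the $2N\times 2N$ matrix $\mathcal{L}$ as a Laplacian directly.
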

  \begin{proof}
  	We start by decomposing the matrix $\mathbf{\hat
  		A}$ as follows
  	\begin{equation}
  	\label{eq:decomposition}
  	\mathbf{\hat A} = \mathbf{\hat A_D}+\mathbf{\hat A_{\Xi}} + \mathbf{\hat A_C},
  	\end{equation}
  	where (i) $\mathbf{\hat{A}_{D}}=\text{diag}(\hat{A}_{ii},\dots,\hat{A}_{NN})$
  	collects the local dynamics only, (ii) $\mathbf{\hat{A}_{\Xi}}=\text{diag}(\hat{A}_{\xi 1},\dots,\hat{A}_{\xi
  		N})$ with 
  	\small
  	\begin{equation}
  	\label{eq:A_xi}
  	\renewcommand\arraystretch{1.5}
  	\hat{A}_{\xi i}=	\frac{1}{C_{ti}}\left[\begin{array}{cc|c}
  	\sum\limits_{j \in \mathcal{N}_i}
  	-\widetilde{R}_{ij}&\sum\limits_{j \in
  		\mathcal{N}_i}-\widetilde{X}_{ij}&\mathbf{0}\\
  	\sum\limits_{j \in
  		\mathcal{N}_i}\widetilde{X}_{ij}&\sum\limits_{j \in
  		\mathcal{N}_i}-\widetilde{R}_{ij}&\mathbf{0}\\
  	\hline
  	\mathbf{0} & \mathbf{0} &\mathbf{0}\\
  	\end{array}\right],
  	\end{equation}
  	\normalsize
  	$\widetilde{R}_{ij} = \frac{R_{ij}}{Z^2_{ij}}$ and $\widetilde{X}_{ij} = \frac{X_{ij}}{Z^2_{ij}}$, takes into account the
  	dependence of each local state on the neighboring DGUs, and (iii)
  	$\mathbf{\hat{A}_C}$ includes the effect of couplings. Notably, this latter matrix is composed by zero
  	blocks on the diagonal and blocks $\hat{A}_{ij}$, $i\neq j$ outside
  	the diagonal.\\
  	Our goal is to demonstrate \eqref{eq:Lyapeqnoverall}, which, using
  	\eqref{eq:decomposition}, becomes
  	\small
  	\begin{equation}
  	\begin{aligned}
  	\label{eq:Lyap_abc}
  	\underbrace{\mathbf{(\hat{A}_{D}+\hat{B}K)^TP+
  			P(\hat{A}_{D}+\hat{B}K)}}_{({a})}&+\underbrace{\mathbf{\hat{A}_{\Xi}^TP}+\mathbf{P\hat{A}_{\Xi}}}_{(b)}+\underbrace{\mathbf{\hat{A}_{C}}^T
  		\mathbf{P+P\hat{A}_{C}}}_{(c)}\leq 0.
  	\end{aligned}
  	\end{equation}
  	\normalsize
  	Since \eqref{eq:Qi_semidef} holds, we have that $(a) =
  	\mathrm{diag}(Q_1,\dots,Q_N)\leq 0$. At this point, we need to study
  	the contribution of matrix $(b)+(c)$ in \eqref{eq:Lyap_abc}. By
  	construction (recalling \eqref{eq:pstruct} and \eqref{eq:A_xi}),
  	matrix $(b)$ is block
  	diagonal, collecting, on its diagonal, blocks ${\hat{A}}_{\xi i}^TP_{i}+P_{i}{\hat{A}}_{\xi i}$ in the form
  	\begin{equation}\renewcommand\arraystretch{2}
  	\label{eq:element_of_b}
  	\begin{aligned}
  	&\left[\begin{array}{cc|c}
  	-\frac{\eta_{i}}{C_{ti}}\sum\limits_{j \in \mathcal{N}_i}\left(\widetilde{R}_{ij}+\widetilde{R}_{ij}\right)& \frac{\eta_{i}}{C_{ti}}\sum\limits_{j \in \mathcal{N}_i}\left(\widetilde{X}_{ij} -\widetilde{X}_{ij}\right)& \mathbf{0} \\
  	\frac{\eta_{i}}{C_{ti}}\sum\limits_{j \in \mathcal{N}_i}\left(\widetilde{X}_{ij} -\widetilde{X}_{ij}\right)&  -\frac{\eta_{i}}{C_{ti}}\sum\limits_{j \in \mathcal{N}_i}\left(\widetilde{R}_{ij}+\widetilde{R}_{ij}\right)& \mathbf{0} \\
  	\hline
  	\mathbf{0} & \mathbf{0} & \mathbf{0}\\
  	\end{array}\right]=\\
  	&\left[\begin{array}{cc|c}
  	-\sum\limits_{j \in \mathcal{N}_i}
  	2\tilde{\eta}_{ij}&0&\mathbf{0}\\
  	0&-\sum\limits_{j \in
  		\mathcal{N}_i}2\tilde{\eta}_{ij}&\mathbf{0}\\
  	\hline
  	\mathbf{0} & \mathbf{0} &\mathbf{0}\\
  	\end{array}\right],
  	\end{aligned}
  	\end{equation}
  	with $\tilde{\eta}_{ij}=\frac{\eta_{i}}{C_{ti}}\widetilde{R}_{ij}=\bar{\sigma}\widetilde{R}_{ij}$.
  	Regarding matrix $(c)$, we have that each the block in
  	position $(i,j)$ is equal to  
  	\begin{equation}
  	\label{eqn:Aij_offdiag}
  	\left\{ \begin{array}{ll}
  	P_i\hat{A}_{ij}+\hat{A}_{ji}^TP_j & \hspace{7mm}\mbox{if } j\in\mathcal{N}_i \\
  	\mathbf{0} & \hspace{7mm}\mbox{otherwise}
  	\end{array} \right.
  	\end{equation}
  	In particular, recalling Assumption \ref{ass:equal_ratio}, by direct calculation, it results

  	\begin{equation}\renewcommand\arraystretch{2}
  	\label{eq:element_of_c}
  	\begin{aligned}
  	P_i\hat{A}_{ij}+\hat{A}_{ji}^TP_j &=
  	\left[\begin{array}{cc|c}
  	\frac{\eta_{i}}{C_{ti}}\widetilde{R}_{ij}+\frac{\eta_{j}}{C_{tj}}\widetilde{R}_{ji}& 0 &\mathbf{0}\\
  	0 &  \frac{\eta_{i}}{C_{ti}}\widetilde{R}_{ij}+\frac{\eta_{j}}{C_{tj}}\widetilde{R}_{ji}&\mathbf{0}\\
  	\hline
  	\mathbf{0} & \mathbf{0} & \mathbf{0}\\
  	\end{array}\right]= \\
  	&=\left[\begin{array}{cc|c}
  	\tilde\eta_{ij}+\tilde\eta_{ji} & 0 & \mathbf{0} \\
  	0 &  \tilde\eta_{ij}+\tilde\eta_{ji} & \mathbf{0} \\
  	\hline
  	\mathbf{0} & \mathbf{0} & \mathbf{0}\\
  	\end{array}\right]= \left[\begin{array}{c|c}
  	2\tilde\eta_{ij}I_2 & \mathbf{0} \\
  	\hline
  	\mathbf{0} & \mathbf{0}\\
  	\end{array}\right].
  	\end{aligned}
  	\end{equation}
  	\normalsize
  	By looking at \eqref{eq:element_of_b} and
  	\eqref{eq:element_of_c}, we observe that only the elements in position $(1,1)$ and $(2,2)$ of each
  	$6\times 6$ block of $(b)+(c)$ can be different from zero. Therefore, the positive/negative definiteness of the $6N\times 6N$
  	matrix $(b)+(c)$ can be equivalently studied by considering the
  	$2N\times 2N$ matrix
  	\begin{equation} 
  	\label{eq:laplacian}
  	\mathcal{L} = \left[\begin{array}{cccc}
  	{\Phi}_{11}&\Phi_{12} & \dots  & \Phi_{1N} \\
  	\Phi_{21} & \ddots &\ddots & \vdots \\
  	\vdots &\ddots & {\Phi}_{N-1\text{ }N-1} &
  	\Phi_{N-1\text{ }N}  \\
  	\Phi_{N1}  & \dots  & \Phi_{N\text{ }N-1} & {\Phi}_{NN}
  	\end{array}
  	\right],
  	\end{equation}
  	obtained by deleting the last four rows and columns in each block
  	of $(b)+(c)$. In particular, we can write \eqref{eq:laplacian} as $\mathcal{L} =
  	\mathcal{M}+\mathcal{G}$, where 
  	\small
  	\begin{equation*}
  	\mathcal{M}=\text{diag}({\Phi}_{11} ,\dots,{\Phi}_{NN}),\hspace{2mm}
  	\Phi_{ii} = \begin{bmatrix}
  	\sum\limits_{j\in\mathcal{N}_i}-2\tilde\eta_{ij}
  	& 0 \\
  	0 & \sum\limits_{j\in\mathcal{N}_i}-2\tilde\eta_{ij} \\
  	\end{bmatrix},
  	\end{equation*}
  	\normalsize and
  	\small
  	\begin{equation*}
  	\label{eq:GG}
  	\mathcal{G}=\left[\begin{array}{cccc}
  	\mathbf{0}&\Phi_{12} & \dots  & \Phi_{1N}\\
  	\Phi_{21} & \mathbf{0} & \ddots  & \vdots \\
  	\vdots &\ddots & \ddots  & \Phi_{N-1\text{ }N}\\
  	\Phi_{N1}  & \dots & \Phi_{N\text{ }N-1}  & \mathbf{0}
  	\end{array}
  	\right].
  	\end{equation*}
  	\normalsize
  	We highlight that, from \eqref{eqn:Aij_offdiag} and
  	\eqref{eq:element_of_c}, blocks $\Phi_{ij}$, $i\neq j$, are equal to 
  	\begin{equation*}
  	\label{eq:bar_eta}
  	\Phi_{ij}=\left\{ \begin{array}{ll}
  	\begin{bmatrix}
  	2\tilde\eta_{ij}
  	& 0 \\
  	0 & 2\tilde\eta_{ij} \\
  	\end{bmatrix}
  	& \hspace{7mm}\mbox{if }
  	j\in\mathcal{N}_i \\
  	\mathbf{0}_2 & \hspace{7mm}\mbox{otherwise}
  	\end{array} \right.
  	\end{equation*}
  	Next, we notice that $\mathcal{L}$ is
  	symmetric, with non negative off-diagonal elements and zero row and
  	column sum. In other words, $\mathcal{L}$ is a Laplacian matrix
  	\cite{godsil2001algebraic}, and, as such, it is
  	negative semidefinite. This allows us to show that \eqref{eq:Lyap_abc}
  	(and, equivalently, \eqref{eq:Lyapeqnoverall}) holds. 
  \end{proof}
  \begin{rmk}
  	The proof of Proposition \ref{pr:semidefinite_abc} reveals that, under
  	Assumptions \ref{ass:ctrl} and \ref{ass:equal_ratio}, interactions between local Lyapunov functions
  	$\mathcal{V}_i(\hat{x}_{[i]})$ due to terms $\hat{A}_{ij}$, $i\neq j$, take
  	the form of a weighted Laplacian matrix associated with the graph $\mathcal{G}_{el}$. Furthermore,
  	differently from the idea in \cite{Riverso_TSG} of nullifying interactions by
  	choosing $\eta_i>0$ in \eqref{eq:pstruct} sufficiently small, here \eqref{eq:Lyapeqnoverall} holds
  	true even if parameters $\eta_i$ are large.
  \end{rmk}

  In order to show the asymptotic stability of the ImG, we will complement 
  Proposition \ref{pr:semidefinite_abc} with the application of the LaSalle invariance
  theorem. This will be done in Section \ref{subsec:localdesign}, after providing the algorithm for local control design.

  \subsection{Design of local controllers}
  \label{subsec:localdesign}
  
  For applying Theorem~\ref{thm:overall_stability}, local control gains $K_i$, $i\in\DD$ guaranteeing \eqref{eq:Qi_semidef} must be designed.
  Let us parametrize the unknown quantities in \eqref{eq:Lyapeqnith} as follows
  \begin{equation}
  \label{eq:parametrization}
  P_i = Y_i^{-1},\quad K_i = G_iY_i^{-1},
  \end{equation}
  where  $
  \renewcommand\arraystretch{1.2}
  G_{i}= \left[ \begin{array}{ccc} \mathcal{G}_{11,i} & \mathcal{G}_{12,i} & \mathcal{G}_{13,i}\end{array}\right]\in \mathbb{R}^{2\times 6}$, $\mathcal{G}_{11,i}, \mathcal{G}_{12,i}, \mathcal{G}_{13,i} \in \mathbb{R}^{2\times 2}$ and, under Assumption \ref{ass:ctrl},
  \small{\begin{equation}
  	\label{eq:Yi_param}
  	\renewcommand\arraystretch{1.2}
  	Y_{i}=
  	\left[ \begin{array}{c|cc}
  	\eta_i^{-1}I_2 & \mathbf{0}_2 & \mathbf{0}_2 \\
  	\hline
  	\mathbf{0}_2 & \mathcal{Y}_{22,i} & \mathcal{Y}_{23,i}\\
  	\mathbf{0}_2 & \mathcal{Y}_{23,i}^T & \mathcal{Y}_{33,i} \\
  	\end{array}\right] = \left[ \begin{array}{c|cc}
  	\eta_i^{-1}I_2 & \mathbf{0} \\
  	\hline
  	\mathbf{0}& Y_{22,i}
  	\end{array}\right].
  	\end{equation}}\normalsize 
  We also focus on the matrix
  \begin{equation}
  \label{eq:Qtilde}
  \tilde Q_i=Y_i Q_i Y_i
  \end{equation}
  instead of $Q_i$. Apparently, when $Y_i>0$, $Q_i$ is negative semidefinite if and only if $\tilde Q_i$ has the same property.
  The advantage of the parametrization \eqref{eq:parametrization} is that, as shown below, all entries of $\tilde Q_i$ depend linearly on $G_i$ and $Y_i$, while products of matrices $P_i$ and $K_i$ appear in $Q_i$ (see \eqref{eq:Lyapeqnith}). Indeed, \eqref{eq:parametrization} is routinely used for mapping the design of state-feedback gains into LMIs \cite{Boyd1994}.
  
  By direct calculation, from
  \eqref{eq:parametrization}, \eqref{eq:Qtilde}, and \eqref{eq:Lyapeqnith} one has 
  \begin{equation}
  \label{tildeQ_i}
  \tilde{Q}_i = \left[ 
  \begin{array}{ccc}
  \mathbf{0}_2 & \tilde{\mathcal{Q}}_{12,i} & \tilde{\mathcal{Q}}_{13,i} \\
  \tilde{\mathcal{Q}}_{12,i}^T & \tilde{\mathcal{Q}}_{22,i} & \tilde{\mathcal{Q}}_{23,i}\\
  \tilde{\mathcal{Q}}_{23,i}^T & \tilde{\mathcal{Q}}_{23,i}^T & \mathbf{0}_2 \\
  \end{array}
  \right],
  \end{equation}
  where
  
  \begin{align}
  \label{eq:cQ12_tilde}
  \tQQ_{12,i}&=\frac{1}{C_{ti}}\YY_{22,i}-\frac{1}{L_{ti}\eta_i} I_2+ \frac{1}{L_{ti}}\GG_{11,i}^T \\
  \label{eq:cQ22_tilde}
  \tQQ_{22,i}&=\hat{\mathcal{A}}_{22,i}\mathcal{Y}_{22,i} + \mathcal{Y}_{22,i} \hat{\mathcal{A}}_{22,i}^T+\frac{1}{L_{ti}}\mathcal{G}_{12,i} + \frac{1}{L_{ti}}\mathcal{G}_{12,i}^T \\
  \label{eq:cQ13_tilde}
  \tilde{\mathcal{Q}}_{13,i} &= \frac{1}{C_{ti}}\mathcal{Y}_{23,i} - \eta_i^{-1}I \\ 
  \label{eq:cQ23_tilde}	
  \tilde{\mathcal{Q}}_{23,i}&=\hat{\mathcal{A}}_{22,i}\mathcal{Y}_{23,i} + \frac{1}{L_{ti}}\mathcal{G}_{13,i}.
  \end{align}
  \normalsize
  
  The following result mirrors Proposition \ref{pr:prop_1} and provides key properties of the blocks of $\tilde Q_i$ that will be used for control design.
  \begin{prp}
  	\label{pr:Qtilde_semidef}
  	Under Assumption \ref{ass:ctrl}, 
  	\begin{enumerate}[(i)]
  		\item \label{Qtilde_a} the matrix $\tilde{Q}_i$ in \eqref{tildeQ_i} cannot be negative definite;
  		\item \label{Qtilde_b} $\tilde{Q}_i\leq 0$ implies
  		\begin{equation}
  		\label{eq:Qi_tilde_struct}
  		\tilde{Q}_i  = \left[ \begin{array}{ccc}
  		\mathbf{0}_2 & \mathbf{0}_2 & \mathbf{0}_2 \\
  		\mathbf{0}_2 & \tilde{\mathcal{Q}}_{22,i} & \mathbf{0}_2\\
  		\mathbf{0}_2 & \mathbf{0}_2 & \mathbf{0}_2 \\
  		\end{array}\right] ,
  		\end{equation}
  		where $\tQQ_{22,i}$ is given in \eqref{eq:cQ22_tilde}.
  	\end{enumerate}
  \end{prp}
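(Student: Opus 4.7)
The plan is to mirror, at the level of $\tilde Q_i$, the reasoning used in the proof of Proposition~\ref{pr:prop_1}.

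First, I would rewrite \eqref{eq:Qtilde} in a more convenient form. Using the parametrization \eqref{eq:parametrization} together with $K_iY_i=G_i$ and $P_iY_i=I$, equation \eqref{eq:Lyapeqnith} yields the compact expression $\tilde Q_i = F_iY_i + Y_iF_i^T$, which is linear in $(Y_i,G_i)$. By block-matrix computation using the structures \eqref{eq:Fi}, \eqref{eqn:Fi_subplots} and \eqref{eq:Yi_param}, I would then verify that the $(1,1)$-block of $\tilde Q_i$ equals $\eta_i^{-1}(\hat{\mathcal{A}}_{11,i}+\hat{\mathcal{A}}_{11,i}^T)$, which vanishes because $\hat{\mathcal{A}}_{11,i}$ is skew-symmetric; the $(3,3)$-block is zero as well, since the third block-row of $F_i$ reads $[-I_2,\mathbf{0}_2,\mathbf{0}_2]$ and the only block of $Y_i$ that could pair with $-I_2$ is its $(1,3)$-entry, which equals $\mathbf{0}_2$ by \eqref{eq:Yi_param}. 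This reproduces the zeros already displayed on the diagonal of \eqref{tildeQ_i}.

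Part (i) would then follow immediately, since a symmetric matrix with a zero diagonal entry cannot be negative definite. For part (ii), I would invoke exactly the elementary fact already used in Proposition~\ref{pr:prop_1}: any row and column passing through a zero diagonal entry of a negative semidefinite matrix must vanish. Applying this to the four zero diagonal entries of $\tilde Q_i$ (two from the $(1,1)$-block and two from the $(3,3)$-block) forces the off-diagonal blocks $\tilde{\mathcal{Q}}_{12,i}$, $\tilde{\mathcal{Q}}_{13,i}$ and $\tilde{\mathcal{Q}}_{23,i}$ all to be zero, leaving only the central block $\tilde{\mathcal{Q}}_{22,i}$, precisely as in \eqref{eq:Qi_tilde_struct}.

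I do not foresee any substantive obstacle: the argument is purely structural. The only mild care needed is the block-matrix bookkeeping showing that both the $(1,1)$- and $(3,3)$-diagonal blocks of $\tilde Q_i$ identically vanish under the parametrization \eqref{eq:Yi_param} (independently of $\mathcal{Y}_{22,i},\mathcal{Y}_{23,i},\mathcal{Y}_{33,i}$ and of $G_i$). Once those two zero diagonal blocks are in place, both parts reduce to the same semidefinite-matrix argument already used in Proposition~\ref{pr:prop_1}, so I expect the proof to be noticeably shorter than that of Proposition~\ref{pr:semidefinite_abc}.
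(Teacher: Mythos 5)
Your proposal is correct and follows essentially the same route as the paper: the paper likewise relies on the zero $(1,1)$ and $(3,3)$ diagonal blocks of $\tilde Q_i$ (established by the direct calculation preceding the proposition, which you re-derive via $\tilde Q_i = F_iY_i+Y_iF_i^T$) and then applies, exactly as you do, the argument of Proposition~\ref{pr:prop_1} that zero diagonal entries of a negative semidefinite matrix force the corresponding rows and columns to vanish.
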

  \begin{proof}
  	Point \eqref{Qtilde_a} follows from the fact that, by construction, the upper left block of $\tilde{Q}_i$ is zero (see \eqref{tildeQ_i}). As for point \eqref{Qtilde_b}, we exploit the same argument used in the proof of Proposition \ref{pr:prop_1}. 
  \end{proof}
  
  Hereafter, we focus on the design of the local control gains. Using the notation of Section \ref{subs:local_conditions}, the synthesis problem can be summarized as follows.
  \begin{prbl}
  	\label{prbl:problem_1}
  	Under Assumptions \ref{ass:ctrl} and \ref{ass:equal_ratio}, compute matrices $K_i$, $i\in\DD$ and $P_{22,i}>0$ such that  \eqref{eq:Qi_semidef} holds.
  \end{prbl}
  If $P_{22,i}>0$, then $Y_i>0$ and hence $Q_i\leq 0 \Leftrightarrow \tilde Q_i\leq 0$.
  By means of Proposition \ref{pr:Qtilde_semidef}-\eqref{Qtilde_b}, the last inequality is equivalent to 
  \begin{align}
  \tQQ_{22,i}&\leq 0 \label{ineq_tQQ22} \\
  \tQQ_{13,i}&= 0, ~\tQQ_{23,i}= 0, \mbox{ and } \tQQ_{12,i}= 0 \label{ineq_tQQ132312}
  \end{align}
  Therefore, Problem \ref{prbl:problem_1} can be rephrased as follows.
  \begin{prbl}
  	\label{prbl:problem_2}
  	Under Assumptions \ref{ass:ctrl} and \ref{ass:equal_ratio}, compute matrices $Y_{22,i}>0$ 
  	and $G_i$, $i\in\DD$ such that \eqref{ineq_tQQ22} and \eqref{ineq_tQQ132312} hold.
  \end{prbl}
  We highlight that the inequality \eqref{ineq_tQQ22} can be always verified strictly, as shown in the following proposition.
  
  \begin{prp}
  	\label{pr:tQ22_neg_def}
  	For any positive-definite matrix $\Gamma_{i}\in\Rset^{2\times 2}$ there exist matrices $\GG_{12,i}$ and $\YY_{22,i}>0$ verifying 
  	\[
  	\begin{aligned}
  	\tQQ_{22,i}&\leq -\mathcal{Y}_{22,i} \Gamma_{i}^{-1}\mathcal{Y}_{22,i}	\label{eq:new_Lyapunov}
  	\end{aligned}
  	\]
  \end{prp}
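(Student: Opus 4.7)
The plan is to exploit the fact that $\mathcal{G}_{12,i}$ enters $\tilde{\mathcal{Q}}_{22,i}$ only through its symmetric part $\mathcal{G}_{12,i}+\mathcal{G}_{12,i}^{T}$, so it provides enough degrees of freedom to absorb any fixed symmetric matrix on the right-hand side. Concretely, the inequality to be established can be rewritten as
\[
\frac{1}{L_{ti}}\bigl(\mathcal{G}_{12,i}+\mathcal{G}_{12,i}^{T}\bigr)
\;\leq\;
-\hat{\mathcal{A}}_{22,i}\mathcal{Y}_{22,i}
-\mathcal{Y}_{22,i}\hat{\mathcal{A}}_{22,i}^{T}
-\mathcal{Y}_{22,i}\Gamma_{i}^{-1}\mathcal{Y}_{22,i}.
\]
The key observation is that, once $\mathcal{Y}_{22,i}>0$ is fixed, the right-hand side is a fixed $2\times 2$ symmetric matrix, which I denote $M_i$. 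Since any $2\times 2$ symmetric matrix can be realized as the symmetric part of a $2\times 2$ matrix, a suitable $\mathcal{G}_{12,i}$ always exists.

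First, I would fix an arbitrary positive definite choice for $\mathcal{Y}_{22,i}$ — the simplest is $\mathcal{Y}_{22,i}=I_{2}$ (or, if one prefers, $\varepsilon I_{2}$ with $\varepsilon>0$). This gives an explicit $M_i=-\hat{\mathcal{A}}_{22,i}-\hat{\mathcal{A}}_{22,i}^{T}-\Gamma_i^{-1}$ that does not depend on the unknown $\mathcal{G}_{12,i}$. Second, I would take $\mathcal{G}_{12,i}$ symmetric, namely
\[
\mathcal{G}_{12,i} \;=\; \tfrac{L_{ti}}{2}\,M_i,
\]
so that $\frac{1}{L_{ti}}(\mathcal{G}_{12,i}+\mathcal{G}_{12,i}^{T})=M_i$. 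Substituting back into the expression \eqref{eq:cQ22_tilde} for $\tilde{\mathcal{Q}}_{22,i}$ yields exactly $\tilde{\mathcal{Q}}_{22,i}=-\mathcal{Y}_{22,i}\Gamma_i^{-1}\mathcal{Y}_{22,i}$, which satisfies the required inequality with equality. Strict inequality can easily be obtained by shifting $\mathcal{G}_{12,i}$ by $-\lambda I_2$ for any $\lambda>0$.

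There is no real obstacle here: the argument is essentially a counting of degrees of freedom ($\mathcal{G}_{12,i}$ has four free entries, while only its three-parameter symmetric part is constrained by the inequality). The only point worth being careful about is the logical order — the proposition allows the prover to \emph{choose} both $\mathcal{Y}_{22,i}$ and $\mathcal{G}_{12,i}$, so we are free to pick $\mathcal{Y}_{22,i}$ first in a way that makes the right-hand side a fixed matrix, and then close the argument by an explicit algebraic choice of $\mathcal{G}_{12,i}$. This also makes clear why the analogous statement involving $\tilde{\mathcal{Q}}_{12,i}$, $\tilde{\mathcal{Q}}_{13,i}$, $\tilde{\mathcal{Q}}_{23,i}$ in \eqref{ineq_tQQ132312} must be handled separately (as equality constraints), since those blocks depend on the remaining free parameters $\mathcal{Y}_{23,i}$, $\mathcal{G}_{11,i}$, $\mathcal{G}_{13,i}$ that are not yet fixed.
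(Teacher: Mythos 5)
Your proof is correct, but it follows a genuinely different route from the paper's. The paper proves the statement by a congruence transformation: it pre- and post-multiplies $\tQQ_{22,i}$ by $\mathcal{Y}_{22,i}^{-1}$, reparametrizes $\tilde{K}_{22,i}=\frac{1}{L_{ti}}\GG_{12,i}\mathcal{\tilde P}_{22,i}$ with $\mathcal{\tilde P}_{22,i}=\YY_{22,i}^{-1}$, and reduces the claim to the standard Lyapunov inequality $(\hat{\mathcal{A}}_{22,i}+\tilde K_{22,i})^T\mathcal{\tilde P}_{22,i}+\mathcal{\tilde P}_{22,i}(\hat{\mathcal{A}}_{22,i}+\tilde K_{22,i})\le -\Gamma_i^{-1}$; controllability of $(\hat{\mathcal{A}}_{22,i},I_2)$ then yields a stabilizing $\tilde K_{22,i}$, the Lyapunov equation with right-hand side $-\Gamma_i^{-1}$ yields $\mathcal{\tilde P}_{22,i}>0$ (hence $\YY_{22,i}$), and $\GG_{12,i}$ is recovered at the end. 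You instead fix $\YY_{22,i}=I_2$, observe that $\GG_{12,i}$ enters \eqref{eq:cQ22_tilde} only through its symmetric part, which can realize any symmetric $2\times 2$ matrix, and pick $\GG_{12,i}=\frac{L_{ti}}{2}\bigl(-\hat{\mathcal{A}}_{22,i}-\hat{\mathcal{A}}_{22,i}^T-\Gamma_i^{-1}\bigr)$ so that $\tQQ_{22,i}=-\YY_{22,i}\Gamma_i^{-1}\YY_{22,i}$ holds exactly. This is a perfectly valid existence proof, and it is more elementary: no controllability argument or Lyapunov-equation solvability is needed, and the certificate is explicit in closed form (strictness is not required by the statement, and equality with the negative-definite right-hand side already makes $\tQQ_{22,i}$ nonsingular, which is what Proposition \ref{pr:quadr_form} later uses). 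What the paper's construction buys is that $\YY_{22,i}$ is left as a genuine unknown determined by the Lyapunov equation rather than pinned to the identity, which matches the subsequent LMI formulation \eqref{eq:optproblem} where $\YY_{22,i}$ is a decision variable and $\Gamma_i^{-1}$ is interpreted as a tunable robustness margin; your remark that the equality constraints \eqref{ineq_tQQ132312} on $\tQQ_{12,i}$, $\tQQ_{13,i}$, $\tQQ_{23,i}$ are handled separately through the remaining free blocks is consistent with how the paper proceeds.
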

  \begin{proof}
  	We first show that, $\forall~\Gamma_i>0$, \eqref{eq:new_Lyapunov} is equivalent to
  	\small
  	\begin{align}
  	\label{eq:Lyapunov}
  	& (\hat{\mathcal{A}}_{22,i} + \tilde{K}_{22,i})^T\mathcal{\tilde{P}}_{22,i} + \mathcal{\tilde{P}}_{22,i}(\hat{\mathcal{A}}_{22,i}+\tilde{K}_{22,i}) \leq -\Gamma_{i}^{-1} \\
  	\label{eq tK22}
  	& \tilde{K}_{22,i}=\frac{1}{L_{ti}}\mathcal{G}_{12,i}\mathcal{\tilde{P}}_{22,i}
  	\end{align}
  	\normalsize
  	where $\mathcal{\tilde{P}}_{22,i} = \mathcal{Y}_{22,i}^{-1}$. 
  	Notably, \eqref{eq:Lyapunov} is obtained pre- and post- multiplying the expression of $\tQQ_{22,i}$ in \eqref{eq:cQ22_tilde} by  $\mathcal{Y}_{22,i}^{-1}$. Since the pair $(\hat{\mathcal{A}}_{22,i}, I_2)$ is controllable, one can always find a matrix $\tilde{K}_{22,i}$ that stabilizes $(\hat{\mathcal{A}}_{22,i} + \tilde{K}_{22,i})$. Therefore, for any $\Gamma_i>0$, there exists  $\mathcal{\tilde{P}}_{22,i} = \mathcal{\tilde{P}}_{22,i}^T > 0$ verifying the Lyapunov equation given by \eqref{eq:Lyapunov} with the equality sign. Hence, the inequality $\eqref{eq:new_Lyapunov}$ is verified by using in \eqref{eq:cQ22_tilde} the matrix $\mathcal{G}_{12,i}$ computed from $\tilde{K}_{22,i}$ via \eqref{eq tK22}. 
  \end{proof}
  We notice that, from  \eqref{eq:new_Lyapunov}, the matrix $\Gamma_i^{-1}$ can be interpreted as a robustness margin in the fulfillment of the inequality $\YY_{22,i}^{-1}\tQQ_{22,i}\YY_{22,i}^{-1}\leq 0$.
  
  Next, we discuss equations \eqref{ineq_tQQ132312}. From \eqref{eq:cQ22_tilde}, \eqref{eq:cQ13_tilde}, and \eqref{eq:cQ12_tilde} they are equivalent, respectively, to
  \begin{align}
  \YY_{23,i}&=\frac{C_{ti}}{\eta_i}I_2, \label{eq:Y23} \\
  \GG_{13,i}&=-\frac{L_{ti}C_{ti}}{\eta_i}\hat{\mathcal{A}}_{22,i}, \label{eq:G13} \\
  \GG_{11,i}&=\frac{L_{ti}}{C_{ti}}\mathcal{Y}_{22,i} + \frac{1}{\eta_i}I_2. \label{eq:G11}
  \end{align}
  Therefore, Problem~\ref{prbl:problem_2} can be stated in the following final form.
  \begin{prbl}
  	\label{prbl:problem_3}
  	Under Assumptions \ref{ass:ctrl} and \ref{ass:equal_ratio}, for a given matrix $\Gamma_i>0$, find $\YY_{22,i}=\YY_{22,i}^T>0$, $\YY_{33,i}=\YY_{33,i}^T>0$ and $\GG_{12,i}$ verifying \eqref{eq:new_Lyapunov} and $Y_{22,i}>0$ (with block $\YY_{23,i}$ in \eqref{eq:Yi_param} given by \eqref{eq:Y23}).
  \end{prbl}
  After solving Problem \ref{prbl:problem_3}, blocks $\GG_{13,i}$ and  $\GG_{11,i}$ of the matrix $G_i$ can be computed as in \eqref{eq:G13} and \eqref{eq:G11}. Furthermore, the controller $K_i$ can be recovered from \eqref{eq:parametrization}.
  
  We highlight that, from Proposition \ref{pr:tQ22_neg_def}, Problem \ref{prbl:problem_3} can be always solved and, therefore, controllers $K_i$ always exist. Most importantly, as shown in the next theorem,
  this control synthesis algorithm guarantees asymptotic stability of the ImG.
  
  \begin{theorem}
  	\label{thm:overall_stability}
  	If Assumptions \ref{ass:ctrl} and \ref{ass:equal_ratio} are fulfilled, local control gains $K_i$ $ i\in\mathcal{D}$ are computed by solving Problem  \ref{prbl:problem_3}, and
  	the graph $\mathcal{G}_{el}$ is connected, then system \eqref{eq:sysaugoverallclosed} is asymptotically stable.
  \end{theorem}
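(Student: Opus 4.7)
The plan is to apply LaSalle's invariance principle with the collective quadratic Lyapunov function $\mathcal{V}(\hat{\mathbf{x}}) = \hat{\mathbf{x}}^T\mathbf{P}\hat{\mathbf{x}}$, where $\mathbf{P}=\mathrm{diag}(P_1,\ldots,P_N)>0$ by Assumption \ref{ass:ctrl} and by the solvability of Problem \ref{prbl:problem_3}. Each gain $K_i$ produced by Problem \ref{prbl:problem_3} satisfies \eqref{ineq_tQQ22} and \eqref{ineq_tQQ132312}, so Proposition \ref{pr:Qtilde_semidef} gives $\tilde Q_i\leq 0$ and hence $Q_i\leq 0$; Proposition \ref{pr:semidefinite_abc} then lifts this to $\mathbf{Q}\leq 0$. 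Consequently $\dot{\mathcal{V}} = \hat{\mathbf{x}}^T\mathbf{Q}\hat{\mathbf{x}} \leq 0$ along the autonomous part of \eqref{eq:sysaugoverallclosed} (take $\hat{\mathbf{d}}=0$), trajectories are bounded, and they converge to the largest invariant set $M$ contained in $E=\{\hat{\mathbf{x}}:\hat{\mathbf{x}}^T\mathbf{Q}\hat{\mathbf{x}}=0\}$.

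To characterize $E$, I would exploit the decomposition $\mathbf{Q}=(a)+(b)+(c)$ from the proof of Proposition \ref{pr:semidefinite_abc}, where $(a)=\mathrm{diag}(Q_1,\ldots,Q_N)\leq 0$ and $(b)+(c)\leq 0$ reduces, after restriction to the PCC-voltage components of $\hat{\mathbf{x}}$, to the weighted graph Laplacian $\mathcal{L}$ of $\mathcal{G}_{el}$. Since both summands are negative semidefinite, $\hat{\mathbf{x}}\in E$ iff each quadratic form vanishes separately. The connectedness of $\mathcal{G}_{el}$ gives $\ker\mathcal{L}=\mathrm{span}\{\mathbf{1}_N\}\otimes\mathbb{R}^2$, so the vanishing of the Laplacian contribution forces the PCC voltages to be synchronised, $V_1^{dq}=\cdots=V_N^{dq}=:V^{dq}$. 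The vanishing of the block-diagonal contribution means $Q_i\hat{x}_{[i]}=0$ for every $i$; using $Q_i=Y_i^{-1}\tilde Q_iY_i^{-1}$, the strict negativity of $\tilde{\mathcal{Q}}_{22,i}$ provided by Proposition \ref{pr:tQ22_neg_def}, the parametrization \eqref{eq:Yi_param}, \eqref{eq:Y23}, and Assumption \ref{ass:equal_ratio}, a direct computation of $\ker Q_i=Y_i\ker\tilde Q_i$ yields the algebraic constraint $v_{[i]}=\bar\sigma\mathcal{Y}_{33,i}I_{ti}^{dq}$.

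The final step, which I expect to be the main obstacle, is to show that the largest invariant set $M\subseteq E$ reduces to $\{0\}$. Any trajectory in $M$ must satisfy the two constraints above for all $t$; I would differentiate them along the closed-loop dynamics to obtain further compatibility conditions. With the PCC voltages synchronised and $\hat{\mathbf{d}}=0$, the coupling sum in \eqref{DGUeqx} vanishes, the PCC equation reduces to $\dot V^{dq}=-\mathrm i\omega_0 V^{dq}+I_{ti}^{dq}/C_{ti}$, and the $i$-independence of $\dot V^{dq}$ forces $I_{ti}^{dq}/C_{ti}$ to be a common signal $J(t)$. Differentiating $v_{[i]}=\bar\sigma\mathcal{Y}_{33,i}I_{ti}^{dq}$ and substituting the integrator dynamics $\dot v_{[i]}=-V^{dq}$ together with the closed-loop current equation produces a linear consistency system that, imposed jointly over all $i\in\mathcal{D}$, is compatible only with $V^{dq}\equiv 0$, $I_{ti}^{dq}\equiv 0$ and $v_{[i]}\equiv 0$. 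Hence $M=\{0\}$ and LaSalle's theorem delivers asymptotic stability of the origin. The crux of the argument is this last reduction: while the Laplacian kernel is easy to identify and cleanly gives voltage synchronisation, eliminating the residual degrees of freedom from $M$ requires carefully combining the DGU-specific algebraic constraints extracted from $\ker Q_i$ with the integrator, PCC and filter-current dynamics of every DGU.
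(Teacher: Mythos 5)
Your route is the paper's route: the same collective Lyapunov function, the same decomposition of $\mathbf{Q}$ into the block-diagonal part $(a)$ and the Laplacian part $(b)+(c)$, and the same characterization of the set where $\dot{\mathcal V}=0$ --- synchronized PCC voltages from connectedness of $\mathcal G_{el}$, plus the per-DGU relation $v_{[i]}=\bar\sigma\,\mathcal{Y}_{33,i}I_{ti}^{dq}$ obtained from $\ker Q_i=Y_i\ker\tilde Q_i$, which is exactly Proposition \ref{pr:quadr_form} and the set $R$ in the paper's proof. Up to that point your argument is sound.

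The gap is the step you yourself flag as the crux: you assert that differentiating the two constraints along the closed-loop dynamics produces a consistency system ``compatible only with'' the origin, but you do not establish this, and it is not a genericity fact --- it is precisely where the structure enforced by Problem \ref{prbl:problem_3} enters. In the paper, imposing that the differentiated current--integrator constraint stays in $R$ gives \eqref{eq:subvec}; the coefficient of $\gamma_i$ there, the matrix $\mathcal L_i$ in \eqref{eq:LL}, vanishes identically \emph{because} of the identities \eqref{eq:Y23}, \eqref{eq:G13}, \eqref{eq:G11} (equivalently \eqref{ineq_tQQ132312}), via the computations leading to \eqref{eq:K31rel}, \eqref{eq:F21} and \eqref{eq:LLeq0}; what remains is \eqref{eq:alpha} with coefficient $\bar\sigma\mathcal{Y}_{22,i}+\bar\sigma^{-1}\mathcal{Y}_{33,i}^{-1}$, nonsingular since $Y_i>0$, so the common voltage $\bar\alpha$ must be zero. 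Even then one is not done: a second invariance pass (the paper's set $S$) is required, in which the velocity of a state with zero voltage block has first block $\bar\sigma^{-1}C_{ti}^{-1}\mathcal{Y}_{33,i}^{-1}\gamma_i$, whose vanishing forces $\gamma_i=0$ (hence $I_{ti}^{dq}=0$ and $v_{[i]}=0$), giving $M=\{\mathbf 0\}$. Without carrying out these cancellations --- which are the payoff of the reparametrization \eqref{eq:parametrization} and of Assumption \ref{ass:equal_ratio} --- the claim that your consistency system admits only the trivial solution is unsupported, so the proof is incomplete at its decisive point.
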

  
  The proof of Theorem \ref{thm:overall_stability} is provided in the Appendix.

  \subsection{Computation of local controllers through numerical
  	optimization}
  \label{sec:ctrls_computation}
  
  It is not difficult to see that a solution to Problem~\ref{prbl:problem_3} is provided by the following LMI optimization problem 
  \begin{subequations}
  	\label{eq:optproblem}
  	\begin{align}
  	\mathcal{O}_i: &\min_{\substack{\YY_{22,i},\YY_{33,i},\GG_{12,i}, \\ \gamma_{1i}, \gamma_{2i}, \beta_{i},\zeta_{i}}}\quad \alpha_{1i}\gamma_{1i}+\alpha_{2i}\gamma_{2i} + \alpha_{3i}\beta_{i}+\alpha_{4i}\zeta_{i}\nonumber \\
  	\label{eq:Ystruct}&Y_{i}=\left[ \begin{matrix}
  	\eta_i^{-1}I_2 & \mathbf{0}_2 & \mathbf{0}_2 \\
  	\mathbf{0}_2 & \mathcal{Y}_{22,i} & \frac{C_{ti}}{\eta_i}I_2\\
  	\mathbf{0}_2 & \frac{C_{ti}}{\eta_i}I_2 & \mathcal{Y}_{33,i} \\
  	\end{matrix}\right]>0\\
  	\label{eq:LMIstab}&\left[\begin{matrix}
  	\hat{\mathcal{A}}_{22,i}\mathcal{Y}_{22,i} + \mathcal{Y}_{22,i} \hat{\mathcal{A}}_{22,i}^T+\frac{1}{L_{ti}}\mathcal{G}_{12,i} + \frac{1}{L_{ti}}\mathcal{G}_{12,i}^T & \mathcal{Y}_{22,i} \\
  	\mathcal{Y}_{22,i} & -\Gamma_{i}
  	\end{matrix}\right]\le 0\\
  	\label{eq:Gcostr}&\begin{bmatrix}
  	-\beta_{i}I & G_{i}^T\\
  	G_{i} & -I
  	\end{bmatrix}<0,\hspace{2mm}\begin{bmatrix}
  	Y_{i} & I\\
  	I & \zeta_{i}I
  	\end{bmatrix}>0\\
  	&\gamma_{1i}> 0,\quad \gamma_{2i}> 0, \quad \beta_{i}>0,\quad\zeta_{i}>0
  	\end{align}
  \end{subequations}
  where $\alpha_{ki}$, $k = 1, \dots, 4$ are positive weights and $\Gamma_i=\diag(\gamma_{1i}, \gamma_{2i})$.
  Indeed, \eqref{eq:LMIstab} corresponds, up to Schur complement, to \eqref{eq:new_Lyapunov}. Furthermore, constraints \eqref{eq:Gcostr} are always feasible. Their role is to penalize aggressive control actions \cite{Riverso_TSG} since they correspond to the bound $||K_i||_2\leq\sqrt{\beta_i}\zeta_i$ and large values of $\beta_i$ and $\zeta_i$ are are penalized in the cost function. Finally, the minimization of $\gamma_{1i}$ and $\gamma_{2i}$ corresponds to the maximization of the robustness margin $\Gamma_i^{-1}$ discussed in the previous section.

  We highlight that the computation
  of controller $\mathcal{C}_{[i]}$ is completely decentralized
  (i.e. independent of the synthesis of controllers
  $\mathcal{C}_{[j]}$, $j\neq i$), as constraints in \eqref{eq:optproblem} depend upon local electrical parameters of DGU $i$ and local design
  parameters ($\alpha_{ki}$, $k=1,\dots,4$). 

  \begin{rmk}
  	Controllers $\mathcal{C}_{[i]}$ yield a closed-loop DGU model
  	that is linear. Hence, it can be easily complemented with pre-filters
  	(for tuning the local bandwidth) and load-current compensators. These
  	enhancements (not used in the simulation in Section \ref{sec:simulations}) are described in \cite{Riverso2014c}.
  \end{rmk}
  \subsection{PnP operations}
  \label{sec:PnP}
  In this Section, we describe the operations to be
  performed whenever the plug-in/-out of a DGU is
  required, in order to preserve the overall ImG stability.
  
  \textbf{Plug-in operation.}
  Suppose that we want to connect a new DGU
  (say $\hat{\Sigma}_{[N+1]}^{DGU}$) to the ImG
  and let $\mathcal{N}_{[N+1]}$ be the
  set of DGUs that will be directly connected to
  ${\hat{\Sigma}}_{[N+1]}^{DGU}$ through
  power lines. 
  
  Then, in order to preserve stability, it is enough to equip $\hat{\Sigma}_{[N+1]}^{DGU}$ with the controller obtained by solving the LMI problem $\OO_{N+1}$ and connect the DGU to the ImG. In particular, differently from the procedure in \cite{Riverso_TSG}, the plug-in of a DGU is never denied and DGUs in
  $\mathcal{N}_{[N+1]}$ do not have to retune their local
  controllers.

  %
  
  \textbf{Unplugging operation.}
  When a DGU is disconnected, this has no impact on the controllers of
  the remaining units, if they are designed using the line-independent
  method described in Section \ref{sec:ctrls_computation}. Therefore, in
  view of Theorem \ref{thm:overall_stability}, stability of the ImG is
  preserved. 
  \begin{rmk}
  	\label{rmk:connected_graph}
  	In Theorem \ref{thm:overall_stability}, the connectivity of $\mathcal{G}_{el}$ is assumed to prove that the origin of $\eqref{eq:sysaugoverallclosed}$ is asymptotic stable. On the other hand, we should notice that unplugging operations might make $\mathcal{G}_{el}$ disconnected. These events, however, do not affect the main results since the same analysis presented in Section \ref{sec:design} can be conducted with respect to each connected subgraph $\mathcal{G}^s_{el}$ (which, in turn, can be seen as an independent ImG) arising from the unplugging operations. In Section \ref{sec:simulations}, we show, through simulations, the capability of PnP line-independent regulators to preserve voltage and frequency stability in spite of the creation of sub-islands caused by multiple power line trips.
  \end{rmk}

  \section{Simulation results}
  \label{sec:simulations}
  In this Section, we study the performance of the proposed PnP
  controllers. We consider the ImG in Figure \ref{fig:10DGUs_simulation},
  composed of 10 DGUs. All DGUs feed RL loads, except DGU 2 which is connected to
  a three-phase six-pulse diode
  rectifier. We notice a loop in the network that complicates
  the voltage regulation. Furthermore, DGUs are non-identical and all the electrical parameters are
  similar to those of the 11-DGUs example in \cite{Riverso_TSG}. The simulation (conducted in PSCAD) starts with DGUs 1-9 connected together and
  equipped with PnP controllers $\mathcal{C}_{[i]}$, $i = 1, \dots, 9$.
  \begin{figure}[!htb]
  	\centering
  	\begin{subfigure}[htb]{0.5\textwidth}
  		\centering
  		\begin{tikzpicture}[scale=.5]
	\ctikzset{bipoles/length=.4cm}
	\tikzstyle{every node}=[font=\sffamily\tiny,  minimum size=.7cm, inner sep=.7]

		\draw (1,1) node(6)  [circle, draw=blue, fill=blue!20] {DG 6};
		\draw (4,1) node(4)  [circle, draw=blue, fill=blue!20] {DG 4};
		\draw (7,1) node(2)  [circle, draw=blue, fill=blue!20] {DG 2};
		\draw (10,1) node(3)  [circle, draw=blue, fill=blue!20] {DG 3};
		\draw (13,1) node(7)  [circle, draw=blue, fill=blue!20] {DG 7};
		\draw (7,-2) node(10)  [circle, draw=blue, fill=blue!20] {DG 10};
		\draw (13,-2) node(8)  [circle, draw=blue, fill=blue!20] {DG 8};
		\draw (16,-2) node(9)  [circle, draw=blue, fill=blue!20] {DG 9};
		\draw (7,4) node(1)  [circle, draw=blue, fill=blue!20] {DG 1};
		\draw (4,4) node(5)  [circle, draw=blue, fill=blue!20] {DG 5};

		\draw[latex-latex] (4) to (6);
		
		\draw[latex-latex] (2) to (4);
		
		\draw[latex-latex] (3) to (2);
		
		\draw[latex-latex] (7) to (3);
		
		\draw[latex-latex] (8) to (7);
		
		\draw[latex-latex] (9) to (8);
		
		\draw[latex-latex] (4) to (6);
		
		\draw[latex-latex] (5) to (4);
		
		\draw[latex-latex] (1) to (5);		

		\draw[latex-latex] (1) to (2);
		
		\draw[red, dashed, latex-latex] (2) to (10);

		\draw[red, dashed, latex-latex] (10) to (8);
		
\end{tikzpicture}
  		\caption{ImG topologies until $t = 12$ s. For $0\leq t< 7.5$ s, DGUs 1-9 are interconnected (in
  			black); at $t = 7.5$ s, DGU 10 joins the network (in red).}
  		\label{fig:10DGUs_simulation_1}
  	\end{subfigure}
  	\begin{subfigure}[htb]{0.5\textwidth}
  		\centering
  		\begin{tikzpicture}[scale=.5]
\ctikzset{bipoles/length=.4cm}
\tikzstyle{every node}=[font=\sffamily\tiny,  minimum size=.7cm, inner sep=.7]

\draw (1,1) node(6)  [circle, draw=blue, fill=blue!20] {DG 6};
\draw (4,1) node(4)  [circle, draw=blue, fill=blue!20] {DG 4};
\draw (7,1) node(2)  [circle, draw=blue, fill=blue!20] {DG 2};
\draw (10,1) node(3)  [circle, draw=blue, fill=blue!20] {DG 3};
\draw (13,1) node(7)  [circle, draw=blue, fill=blue!20] {DG 7};
\draw (7,-2) node(10)  [circle, draw=blue, fill=blue!20] {DG 10};
\draw (13,-2) node(8)  [circle, draw=blue, fill=blue!20] {DG 8};
\draw (16,-2) node(9)  [circle, draw=blue, fill=blue!20] {DG 9};
\draw (7,4) node(1)  [circle, draw=blue, fill=blue!20] {DG 1};
\draw (4,4) node(5)  [circle, draw=blue, fill=blue!20] {DG 5};

\draw[latex-latex] (4) to (6);

\draw[latex-latex] (2) to (4);

\draw[latex-latex] (3) to (2);

\draw[latex-latex] (8) to (7);

\draw[latex-latex] (9) to (8);

\draw[latex-latex] (4) to (6);

\draw[latex-latex] (5) to (4);

\draw[latex-latex] (1) to (5);		

\draw[latex-latex] (1) to (2);

\draw[latex-latex] (2) to (10);

	\draw[blue, dashed] (0,-3) -- (11,-3) -- (11,5) -- (0,5)node[sloped, midway, above]{{ \small{$\mathcal{G}_{el}^{s1}$}}}  -- (0,-3);
	\draw[blue, dashed] (12,-3) -- (17,-3) -- (17,2) -- (12,2)node[sloped, midway, above]{{ \small{$\mathcal{G}_{el}^{s2}$}}}  -- (12,-3);
\end{tikzpicture}
  		\caption{Independent ImGs after the trip of lines 3-7 and 8-10 occurred at time $t = 12$ s.}
  		\label{fig:10DGUs_simulation_2}
  	\end{subfigure}
  	\caption{Simulation of a 10-DGUs ImGs: considered network topologies.}
  	\label{fig:10DGUs_simulation}
  \end{figure}
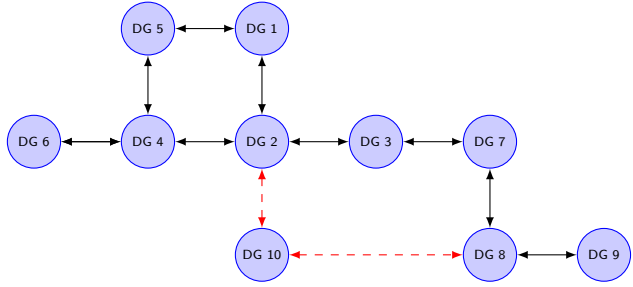
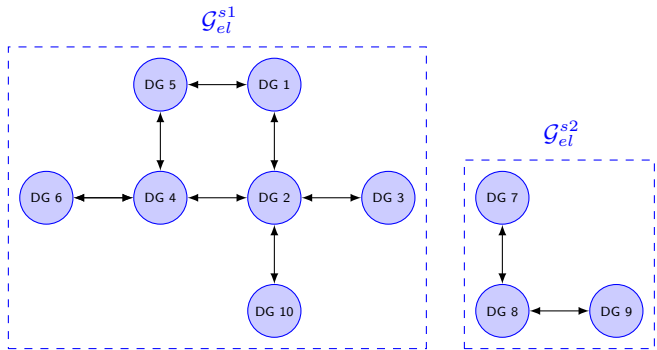
  
  As a first test, we validate the capability of PnP regulators to deal
  with real-time plugging in of DGUs. Therefore, at time $t = 7.5$ s, we
  simulate the connection of DGU 10, with DGUs 2 and 8 (see Figure
  \ref{fig:10DGUs_simulation_1}). 
  The $dq$ component of the voltages at PCCs 2, 8, 10 are shown in
  Figures \ref{fig:Vdq2}, \ref{fig:Vdq8} and \ref{fig:Vdq10}, respectively. Notably, we notice very small
  deviations of the DGUs voltages from their respective reference
  signals ($V^d_{2,ref}=0.6$ pu, $V^q_{2,ref}=0.5$ pu,
  $V^d_{8,ref}=0.7$ pu, $V^q_{8,ref}=0.6$ pu, and $V^q_{10,ref}=0.8$ pu, $V^q_{10,ref}=0.6$ pu). Furthermore, these deviations are
  compensated after a short transient. We also notice that the oscillations of $V^d_{2}$ and $V^q_{2}$ around their respective references are due to the fact that the corresponding signals in the $abc$ reference frame are not perfectly sinusoidal. This behavior is expected as nonlinear loads (like, e.g., the rectifier connected at PCC 2) inherently induce harmonic distortions on electrical variables.
  
  Next, in order to assess the robustness of the PnP-controlled ImG to
  load dynamics, at time $t= 10$ s we simulate an abrupt switch of the
  $RL$ load at PCC 10 (i.e. from $R = 60$ $\Omega$, $L = 0.02$ mH to $R = 120$ $\Omega$, $L
  = 0.02$ mH). From Figures \ref{fig:Vdq2}, \ref{fig:Vdq8} and \ref{fig:Vdq10}, we notice that the $d$ and $q$
  components of the voltages at PCCs 2, 8 and 10, do not significantly
  deviate from their references, thus revealing that step changes in
  the loads can be rapidly absorbed. Figure \ref{fig:freq} shows that the real-time plug-in of DGU
  10 and the load change at its PCC produce minor effects also on the
  frequency profiles of the PCC voltages. Notably, PnP regulators are capable to promptly restore the frequencies to
  the reference value (50 Hz) guaranteeing, overall, variations smaller than 0.6 Hz. In a similar way, we do not notice significant
  deviations from the reference RMS voltages (see Figure
  \ref{fig:Vrms}).
  
  Hereafter, we test the capabilities of PnP regulators to preserve voltage and frequency stability even when a sudden disconnection of a portion of the network occurs. To this purpose, at time $t =12$ s, we simulate the simultaneous trip of lines 3-7 and 8-10, leading to the formation of two independent ImGs (see the corresponding subgraphs $\mathcal{G}_{el}^{s1}$ and $\mathcal{G}_{el}^{s2}$ Figure \ref{fig:10DGUs_simulation_2}). In the light of Remark \ref{rmk:connected_graph}, the stability of the two networks is preserved (as shown in Figures \ref{fig:Vdq3}-\ref{fig:Vrms}), without the need to redesign any local controller. This feature can be useful in those scenarios where the disconnection of DGUs might need to be performed abruptly, due, for instance, to faults in the power lines.
  
  Finally, we notice that
  the Total Harmonic
  Distortion (THD) of the voltage at PCC 2 (whose local load is
  nonlinear) always remains below 5$\%$, which is the maximum limit
  recommended by IEEE standards \cite{IEEE2009} (see Figure \ref{fig:THD2}).
  
  \subsection{Robustness of PnP regulators to losses of clock synchronization}
  \label{sec:clock_desynch}
  In the following, we show that collective ImG voltage and frequency stability is not affected by losses of clock synchronization between DGUs equipped with local PnP regulators. More in details, 
  relative drifts exhibited by local oscillators (i.e. the clocks used to perform the $abc \rightarrow dq$ and $dq \rightarrow abc$ transformations in the gray boxes in Figure \ref{fig:ctrl_part}) will only have an impact on the tracking performance and not on collective stability. This is due to the fact that clock desynchronization acts as an additive disturbance on the $dq$ side of the $abc-dq$ transformation blocks. Consequently, these drifts have the same effect of disturbances on the local controller inputs and outputs (the blue and red arrows in Figure \ref{fig:ctrl_part}). 
  From linear system theory, we expect such exogenous disturbances only to affect performance, without compromising the asymptotic stability of the ImG. To support this point, we have run the same simulation described in Section \ref{sec:simulations}, adding now significant 
  phase shifts\footnote{The implemented phase shifts are much higher than the worst case ones induced by currently available crystal oscillators (whose error $e$ is comprised between $2\times 10^{-6}$ and $ 2\times 10^{-11}$ seconds per year \cite{Guerrero2013,Etemadi2012a}). Assuming $\omega_0 = 2\pi 50$ [rad/s] and $e=2\times 10^{-6}$ [s/year], after one year, the worst case phase shift is $\tilde{\theta}_{0, \mathrm{DGU}_i} = e_{\mathrm{DGU}_i}\cdot\omega_0 \cdot \frac{180}{\pi} = \left(3.6 \times 10^{-2}\right)^\circ$.}  
  $\tilde{\theta}_{0,\mathrm{DGU}_i}$ in the clocks of DGUs 2, 3, 7, 8 and 10 ($\tilde{\theta}_{0,\mathrm{DGU}_2} = 2.5^\circ$, $\tilde{\theta}_{0,\mathrm{DGU}_3} = 1^\circ$, $\tilde{\theta}_{0,\mathrm{DGU}_7} = 2^\circ$, $\tilde{\theta}_{0,\mathrm{DGU}_7} = 2^\circ$, $\tilde{\theta}_{0,\mathrm{DGU}_8} = 3^\circ$, $\tilde{\theta}_{0,\mathrm{DGU}_{10}} = 4^\circ$). The subplots in Figure \ref{fig:overall_sim_nosynch} (to be compared with the corresponding ones in Figure \ref{fig:overall_sim})
  reveal that, in this new scenario, the tracking of reference signals given in Section \ref{sec:simulations} is affected by offsets.
  Nonetheless, asymptotic stability is always preserved in spite of changes in the ImG topology and in the load condition.
  \begin{rmk}
  	For the sake of completeness, we highlight that, besides relying on high precision local oscillators, clock synchronization 		
  	can be achieved through GPS radio clock with high accuracy \cite{4579760} or via packet networks, using either distributed protocols \cite{gusella1989accuracy} or approaches based on all-to-all communication \cite{IEEE2017}.
  \end{rmk}
  \begin{figure*}[!htb]
  	\centering
  	\begin{subfigure}[!htb]{0.23\textwidth}
  		\centering
  		\includegraphics[width=1.15\textwidth]{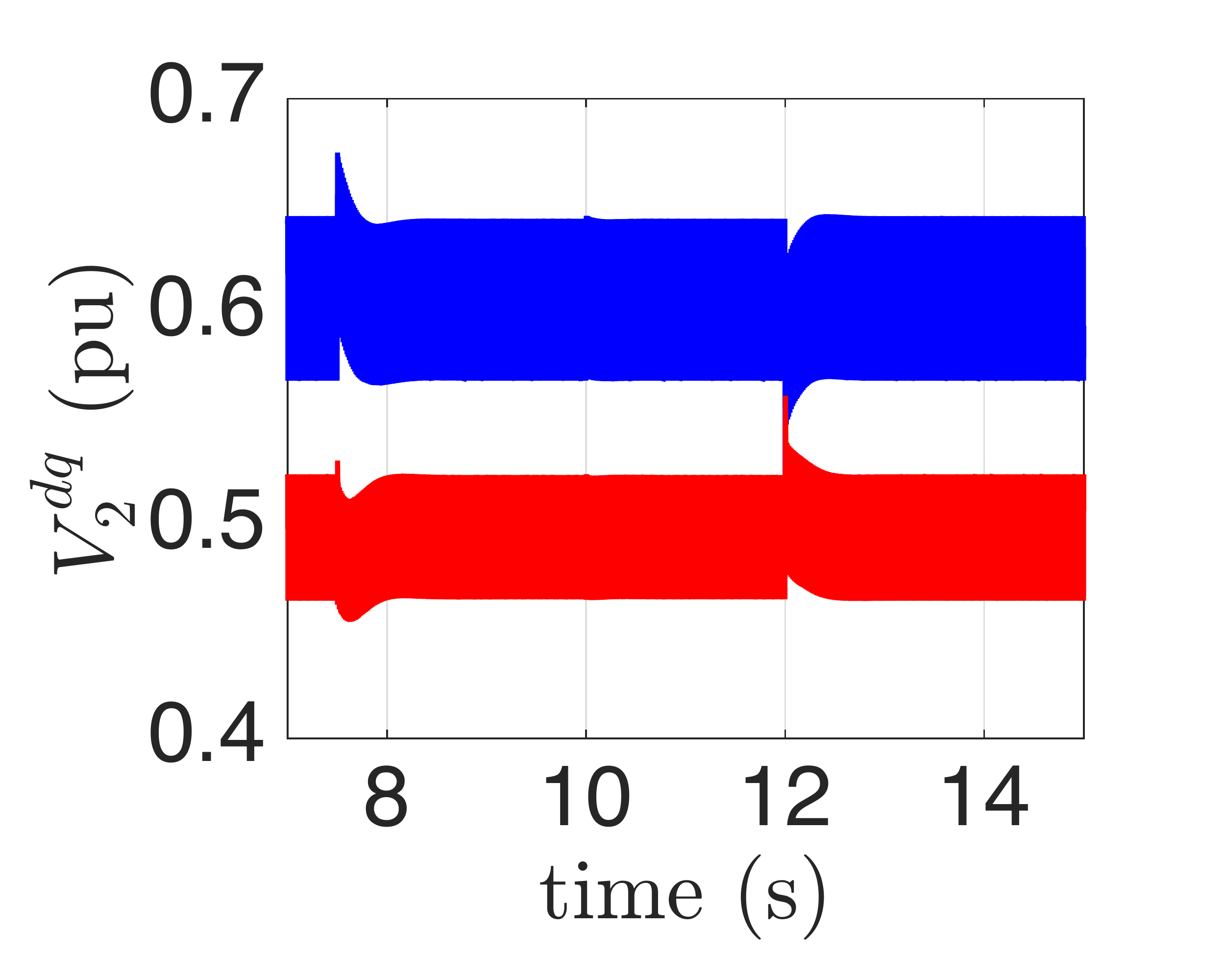}
  		\caption{\emph{d} (blue) and \emph{q} (red) components of the voltage at $PCC_2$.}
  		\label{fig:Vdq2}
  	\end{subfigure}\hspace{2mm}
  	\begin{subfigure}[!htb]{0.23\textwidth}
  		\centering
  		\includegraphics[width=1.15\textwidth]{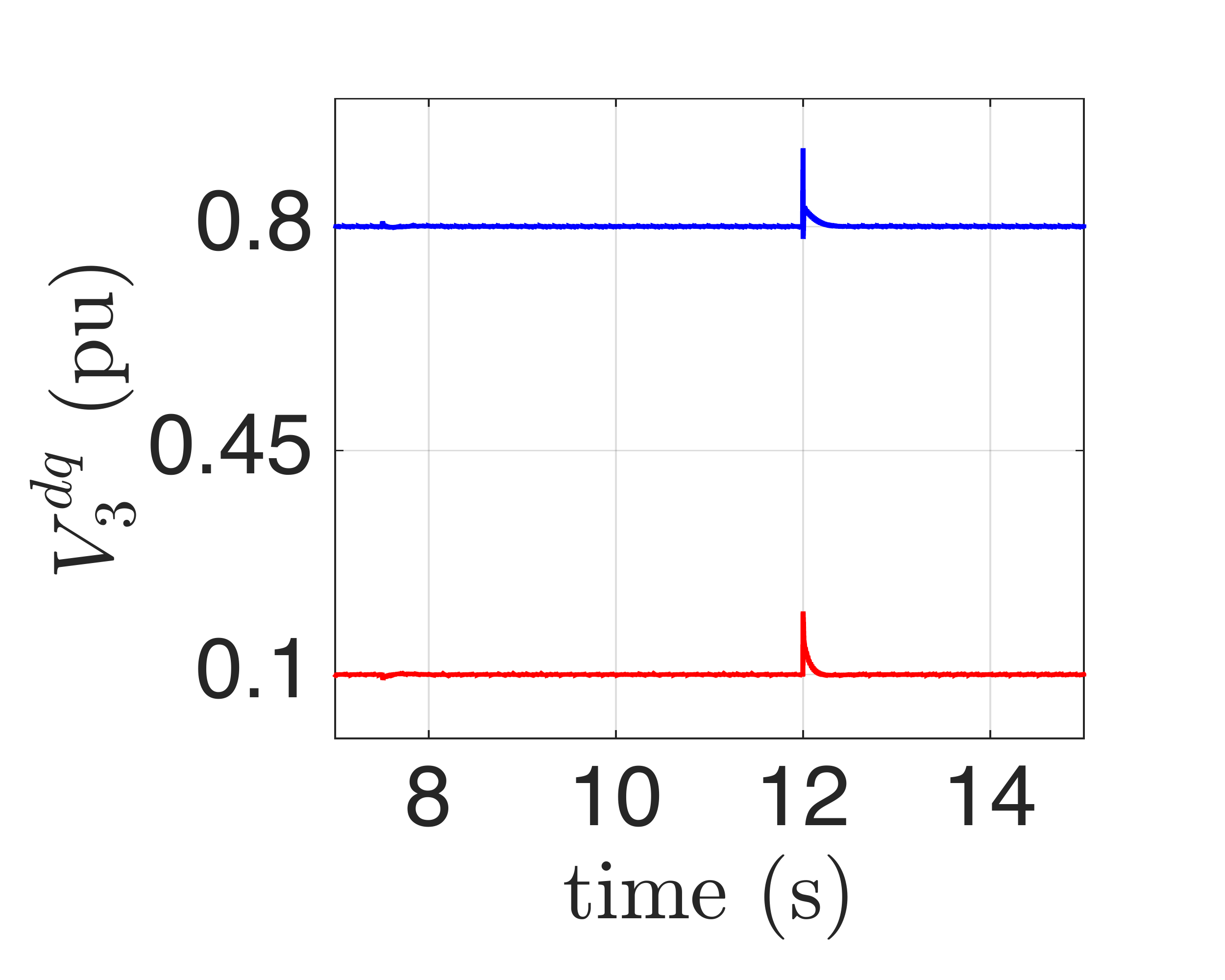}
  		\caption{\emph{d} (blue) and \emph{q} (red) components of the voltage at $PCC_3$.}
  		\label{fig:Vdq3}
  	\end{subfigure}\hspace{2mm}
  	\begin{subfigure}[!htb]{0.23\textwidth}
  		\centering
  		\includegraphics[width=1.15\textwidth]{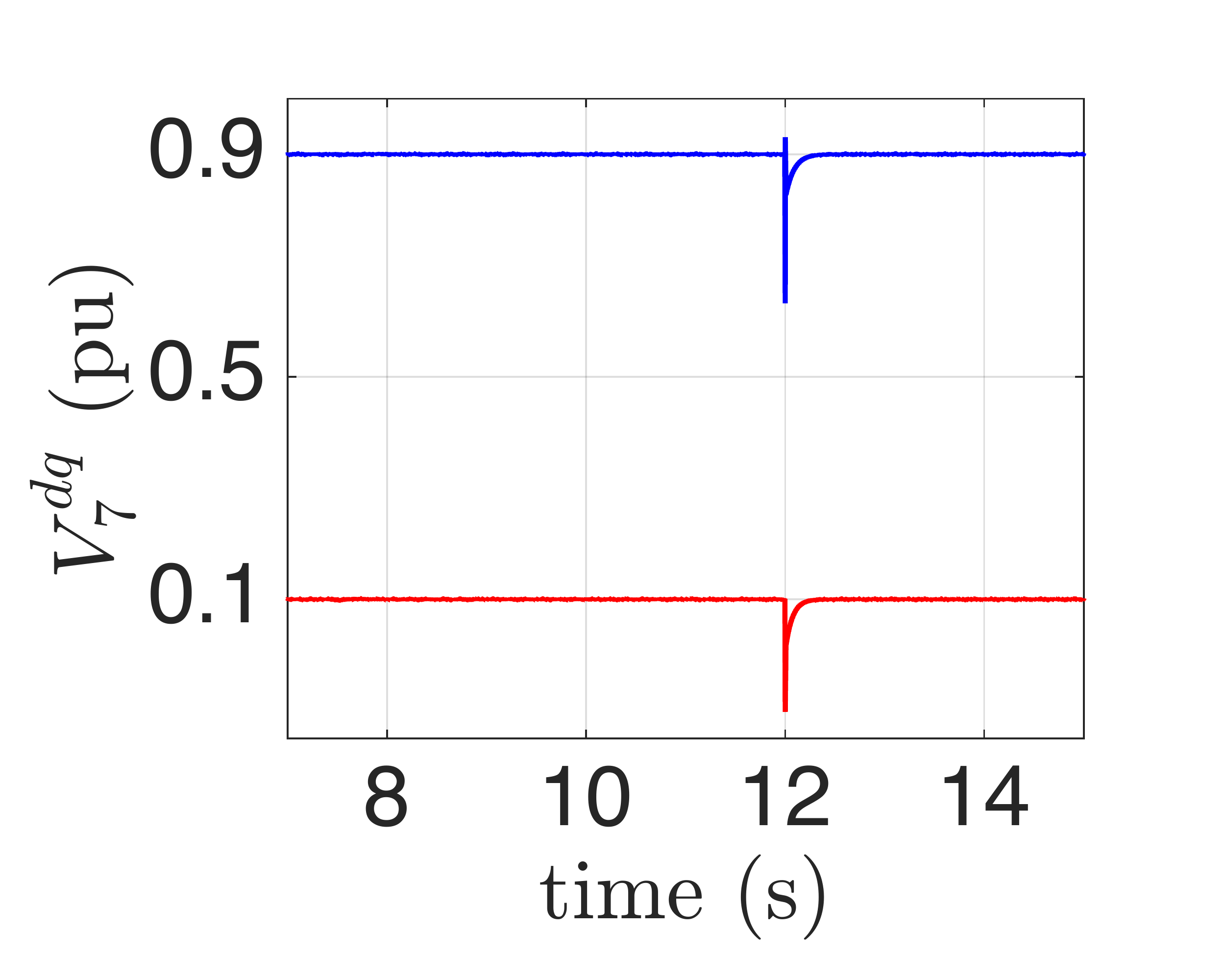}
  		\caption{\emph{d} (blue) and \emph{q} (red) components of the voltage at $PCC_7$.}
  		\label{fig:Vdq7}
  	\end{subfigure}\hspace{2mm}
  	\begin{subfigure}[!htb]{0.23\textwidth}
  		\centering
  		\includegraphics[width=1.15\textwidth]{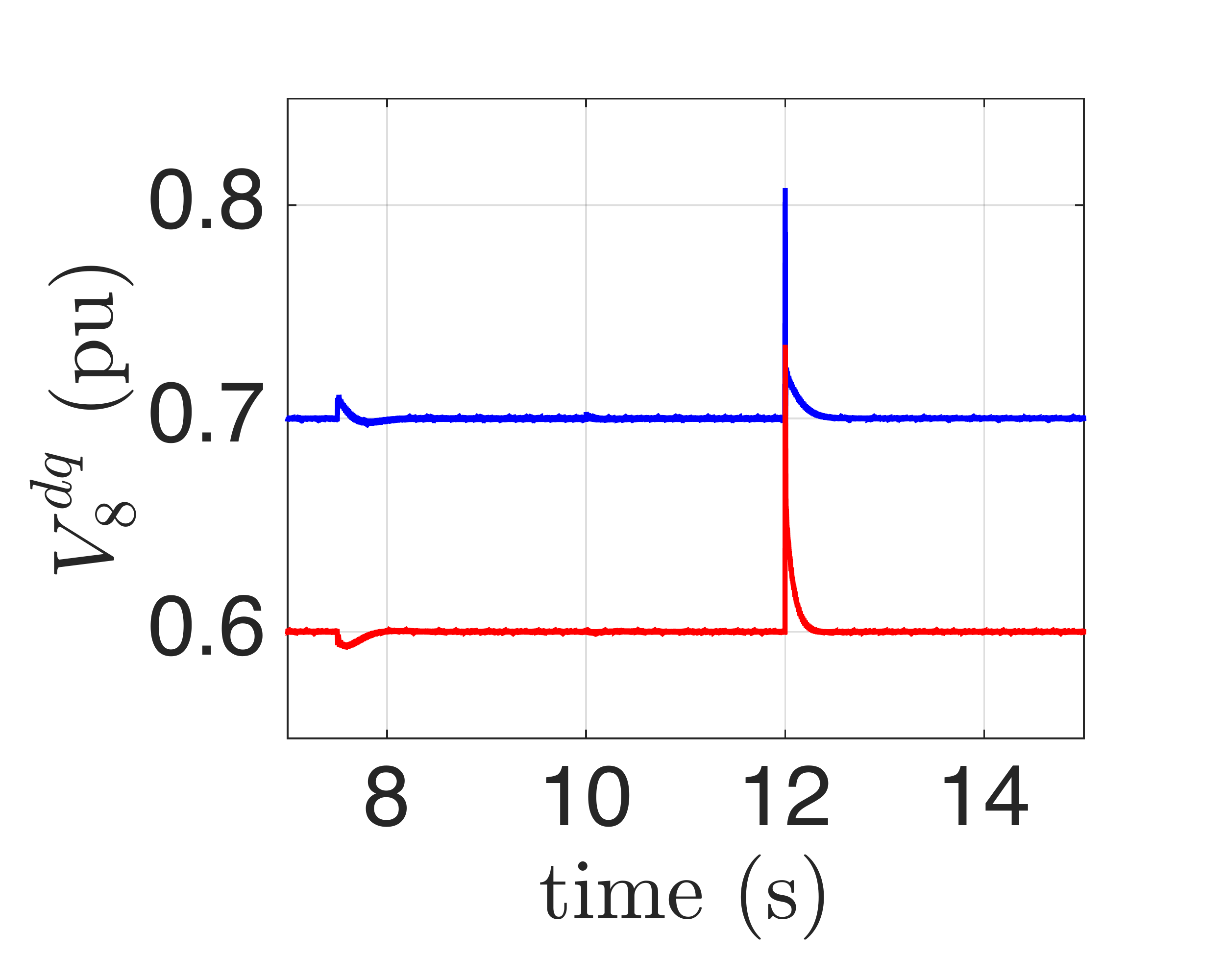}
  		\caption{\emph{d} (blue) and \emph{q} (red) components of the voltage at $PCC_8$.}
  		\label{fig:Vdq8}
  	\end{subfigure}
  	\begin{subfigure}[!htb]{0.23\textwidth}
  		\centering
  		\includegraphics[width=1.15\textwidth]{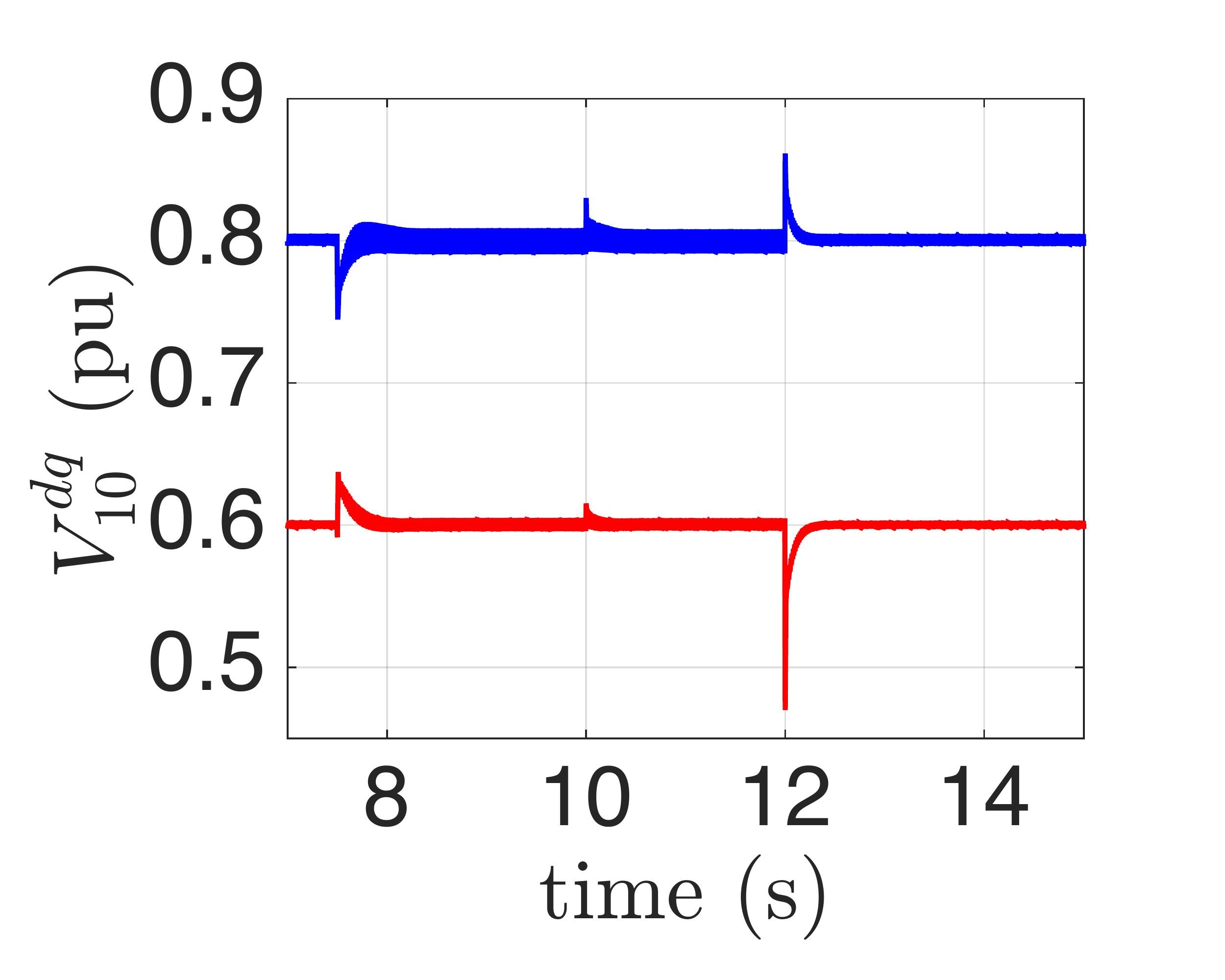}
  		\caption{\emph{d} (blue) and \emph{q} (red) components of the voltage at $PCC_{10}$.}
  		\label{fig:Vdq10}
  	\end{subfigure}\hspace{2mm}
  	\begin{subfigure}[!htb]{0.23\textwidth}
  		\centering
  		\includegraphics[width=1.15\textwidth]{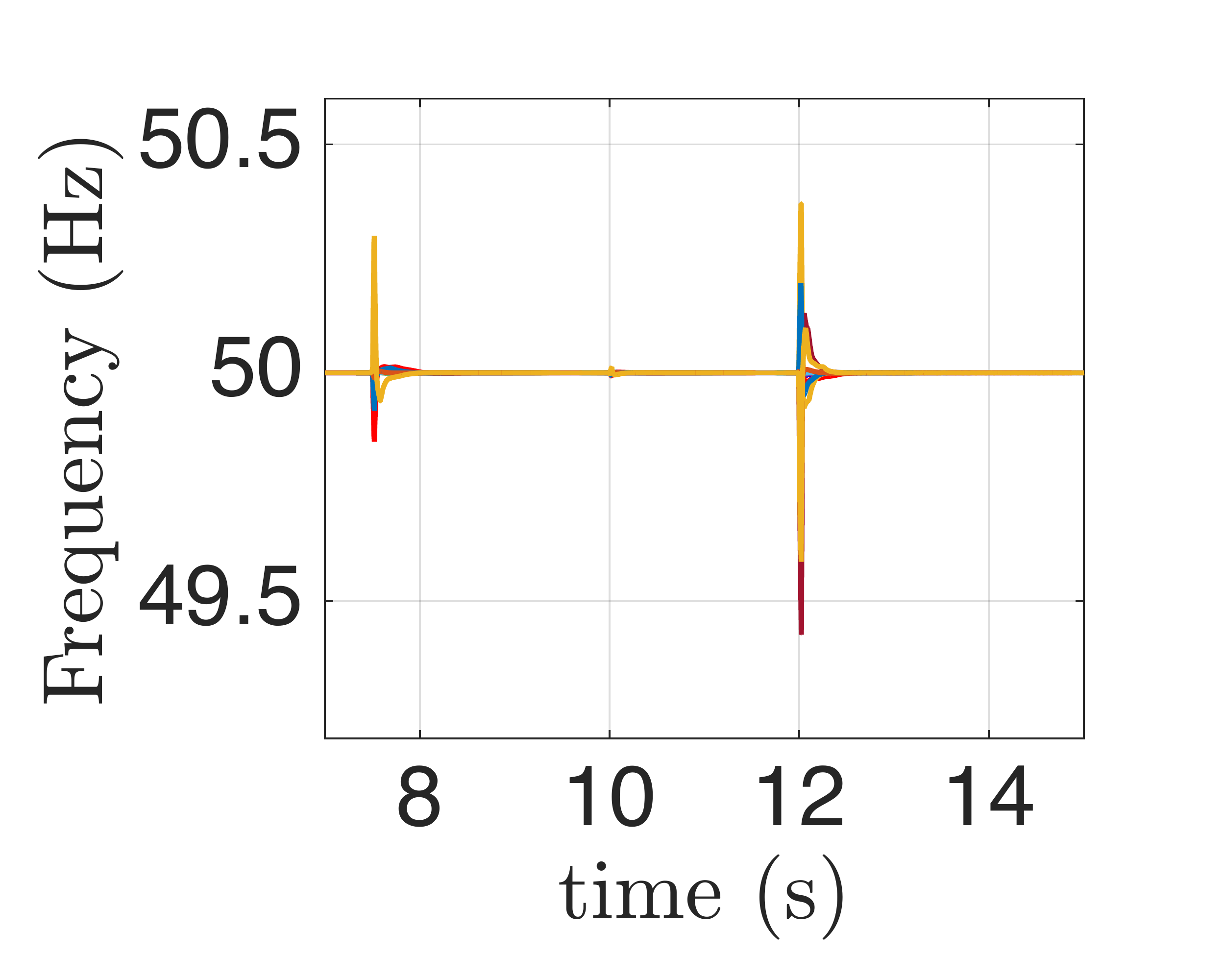}
  		\caption{Frequency of phase $a$ at the $PCC$s.}
  		\label{fig:freq}
  	\end{subfigure}\hspace{2mm}
  	\begin{subfigure}[!htb]{0.23\textwidth}
  		\centering
  		\includegraphics[width=1.15\textwidth]{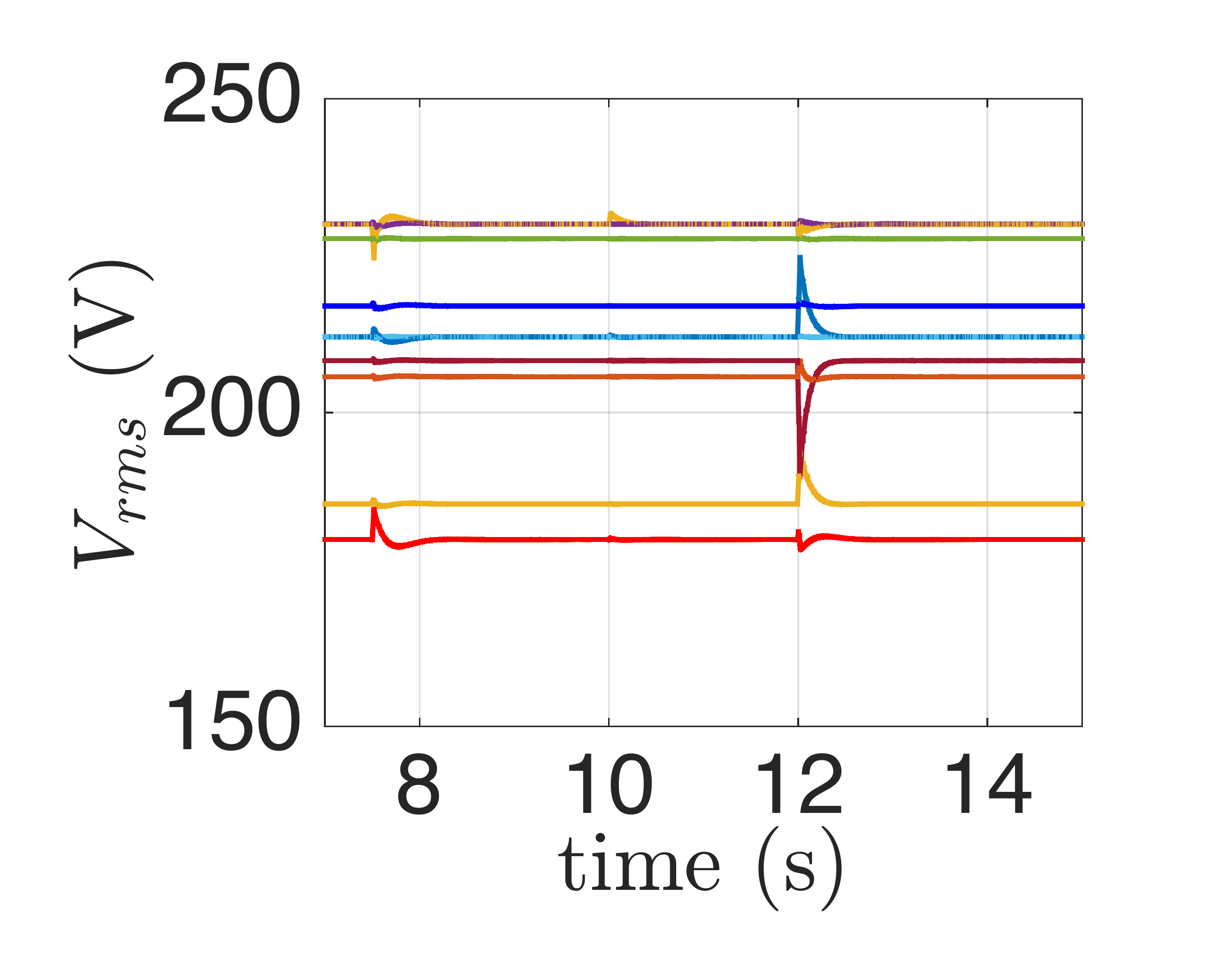}
  		\caption{RMS voltages of phase $a$ at the $PCC$s.}
  		\label{fig:Vrms}
  	\end{subfigure}\hspace{2mm}
  	\begin{subfigure}[!htb]{0.23\textwidth} 
  		\centering
  		\includegraphics[width=1.15\textwidth]{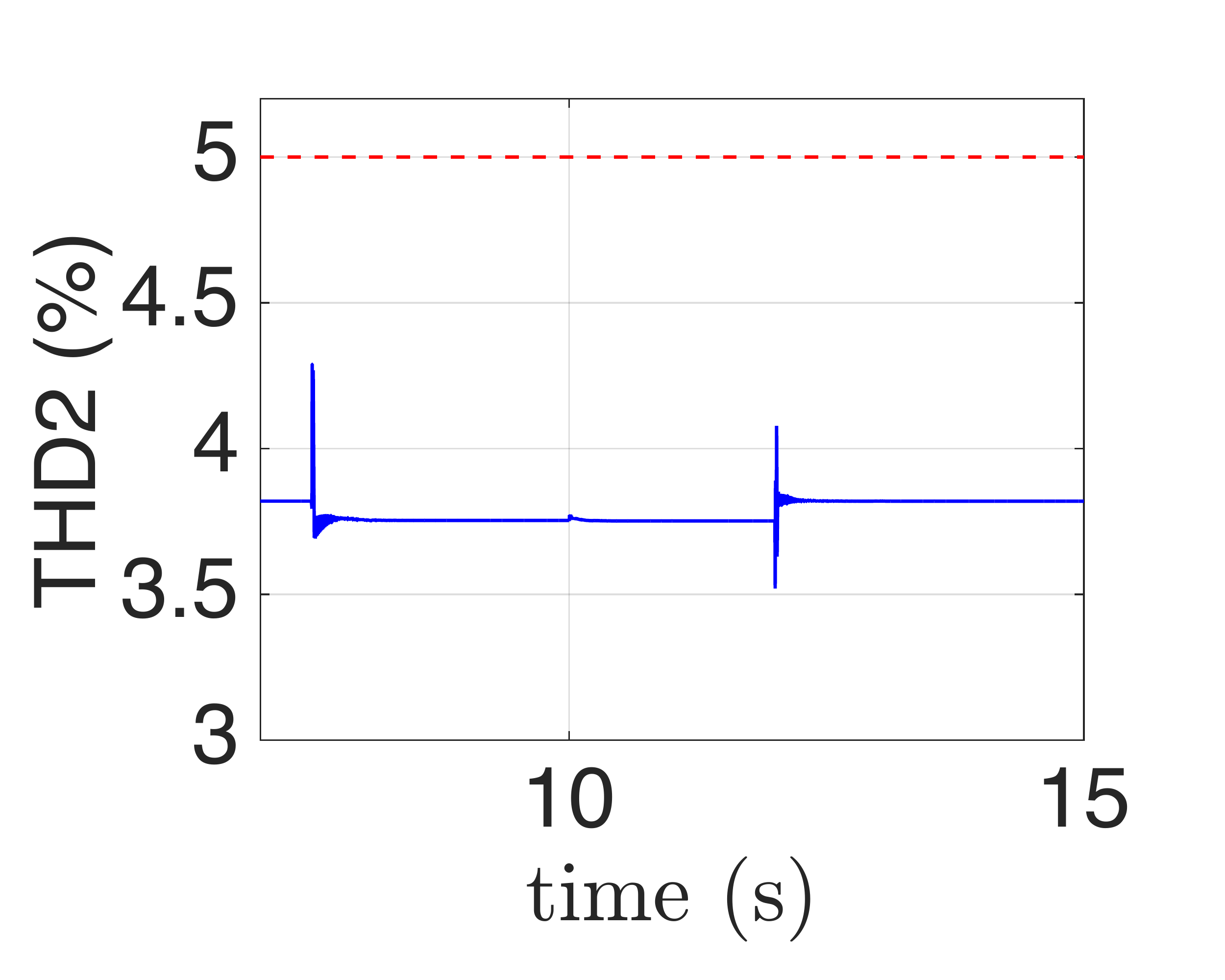}
  		\caption{THD of phase $a$ of the voltage at $PCC_2$.}
  		\label{fig:THD2}
  	\end{subfigure}
  	\caption{Performance of PnP voltage and
  		frequency control. Connection of DGU 10, load change at PCC 10, and disconnection of DGUs 3 and 7, are performed at times $t =7.5$ s, $t = 10$ s and $t = 12$ s, respectively.}
  	\label{fig:overall_sim}   
  \end{figure*}
  
  \begin{figure*}[!htb]
  	\centering
  	\begin{subfigure}[!htb]{0.23\textwidth}
  		\centering
  		\includegraphics[width=1.15\textwidth]{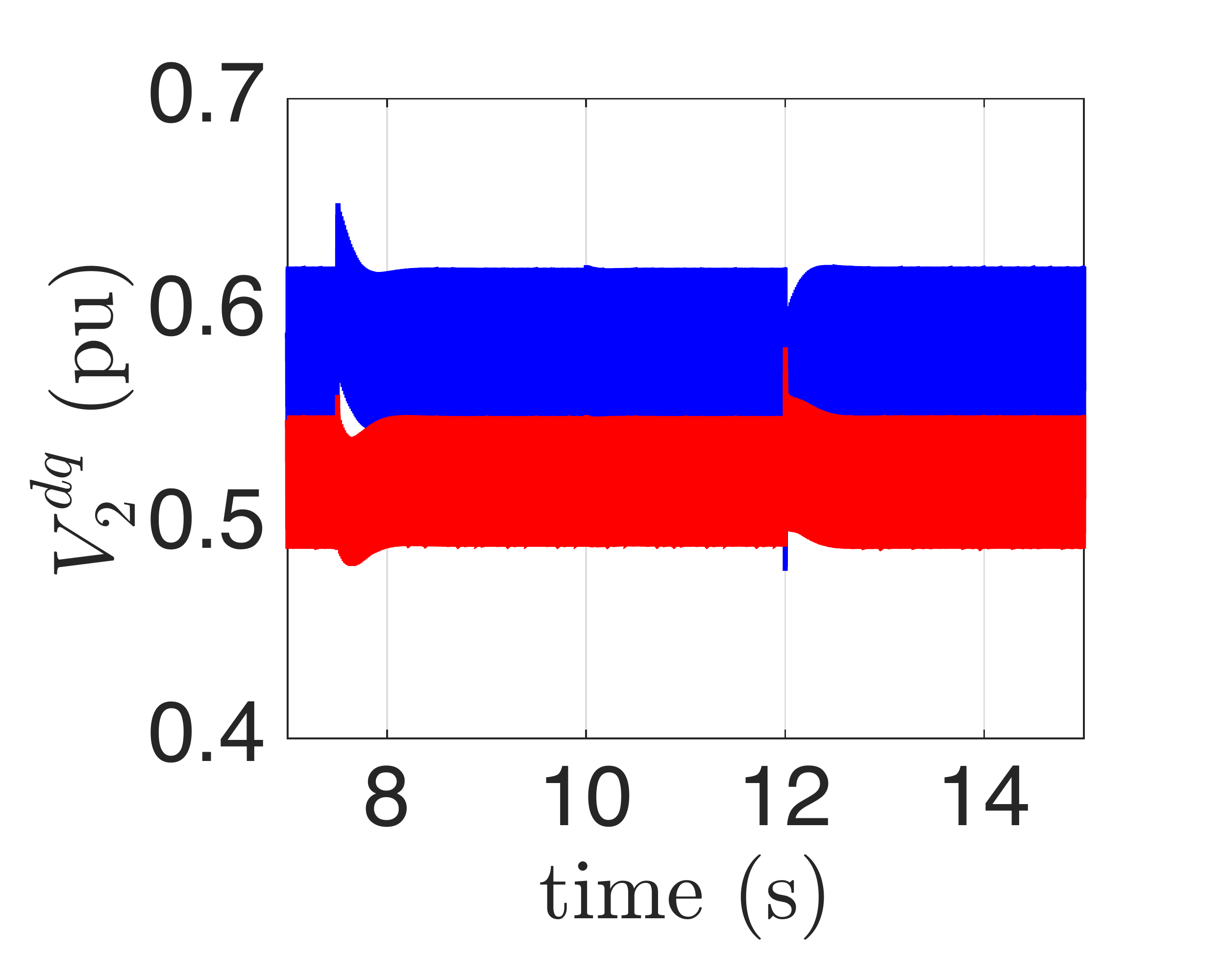}
  		\caption{\emph{d} (blue) and \emph{q} (red) components of the voltage at $PCC_2$.}
  		\label{fig:Vdq2ns}
  	\end{subfigure}\hspace{2mm}
  	\begin{subfigure}[!htb]{0.23\textwidth}
  		\centering
  		\includegraphics[width=1.15\textwidth]{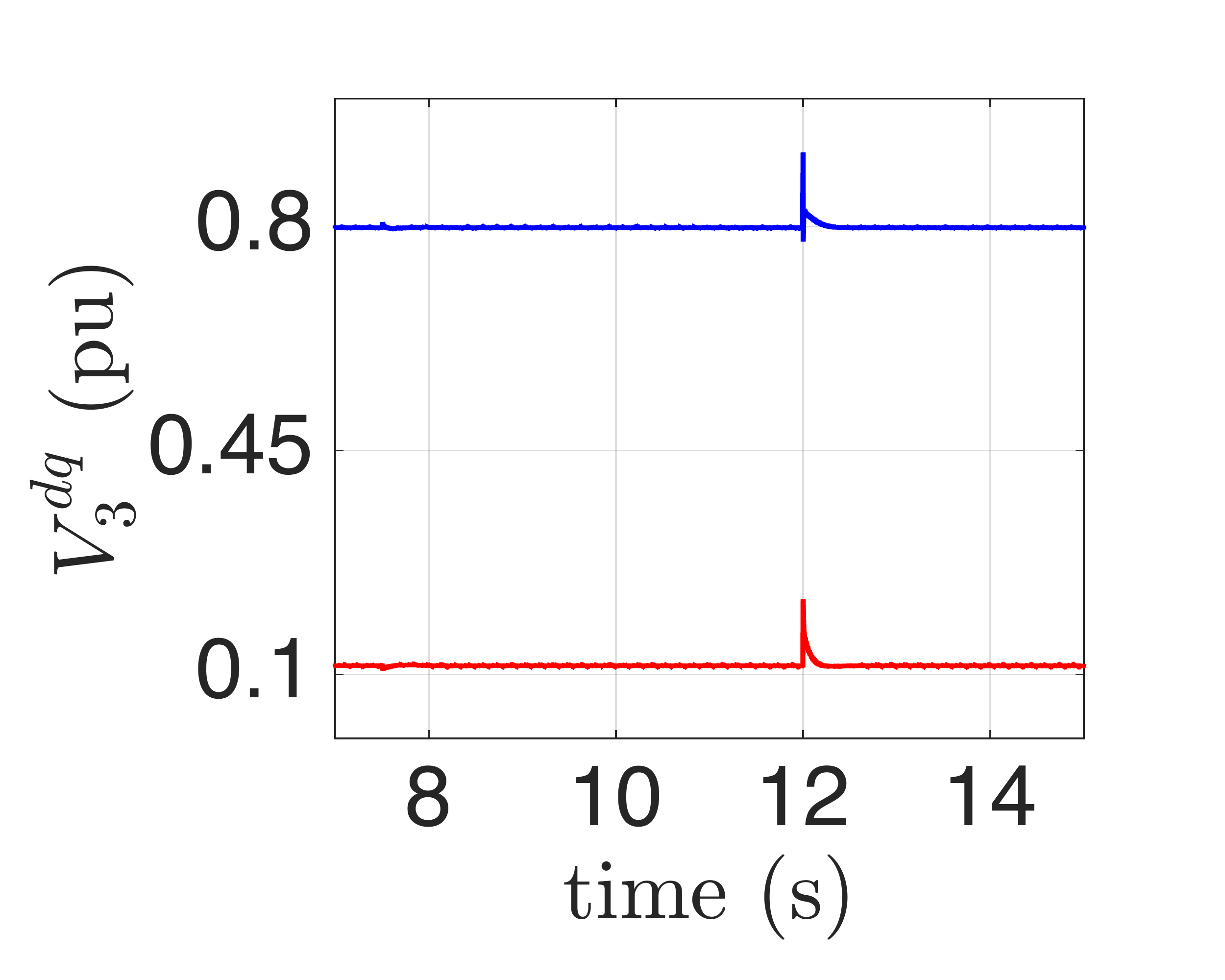}
  		\caption{\emph{d} (blue) and \emph{q} (red) components of the voltage at $PCC_3$.}
  		\label{fig:Vdq3ns}
  	\end{subfigure}\hspace{2mm}
  	\begin{subfigure}[!htb]{0.23\textwidth}
  		\centering
  		\includegraphics[width=1.15\textwidth]{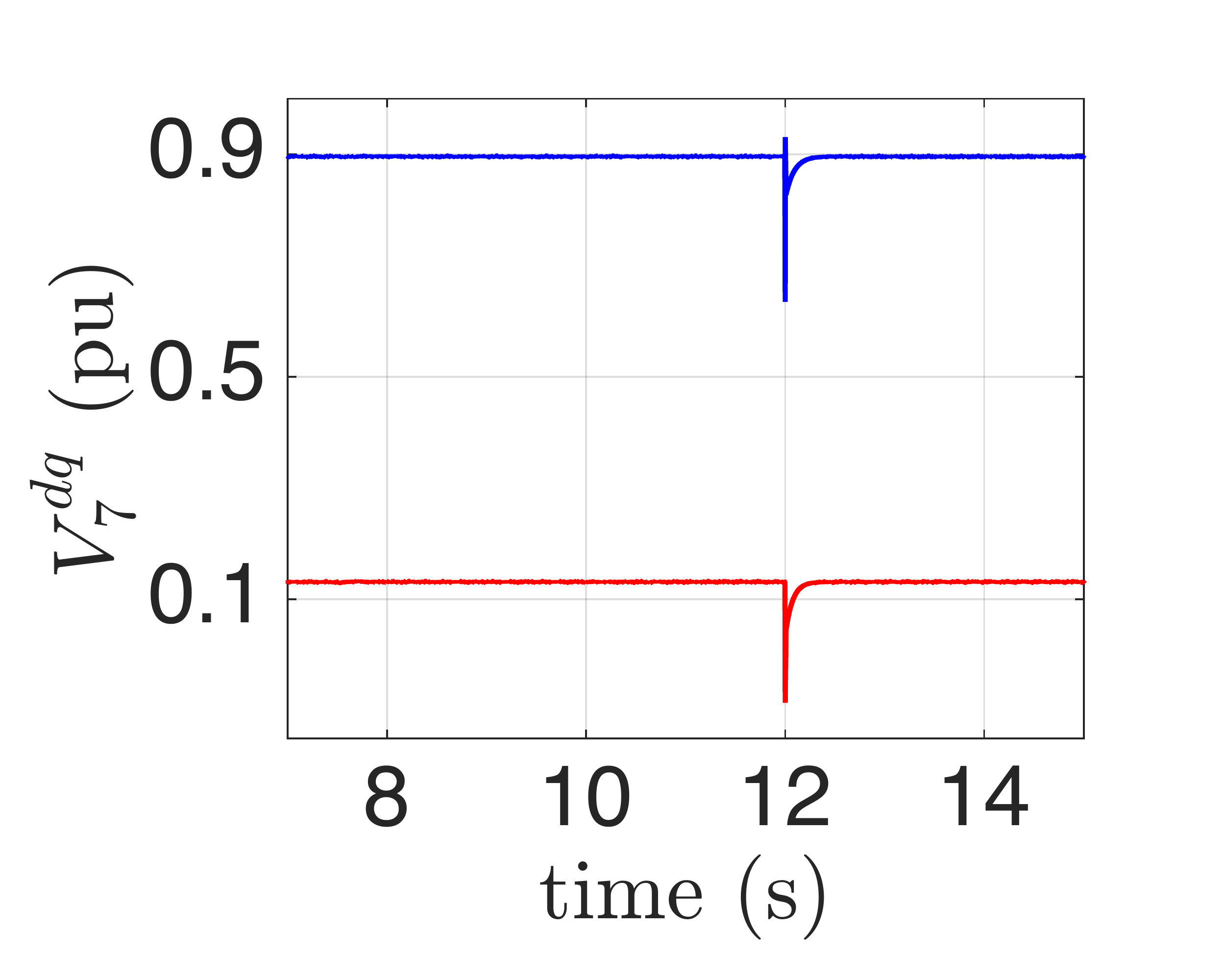}
  		\caption{\emph{d} (blue) and \emph{q} (red) components of the voltage at $PCC_7$.}
  		\label{fig:Vdq7ns}
  	\end{subfigure}\hspace{2mm}
  	\begin{subfigure}[!htb]{0.23\textwidth}
  		\centering
  		\includegraphics[width=1.15\textwidth]{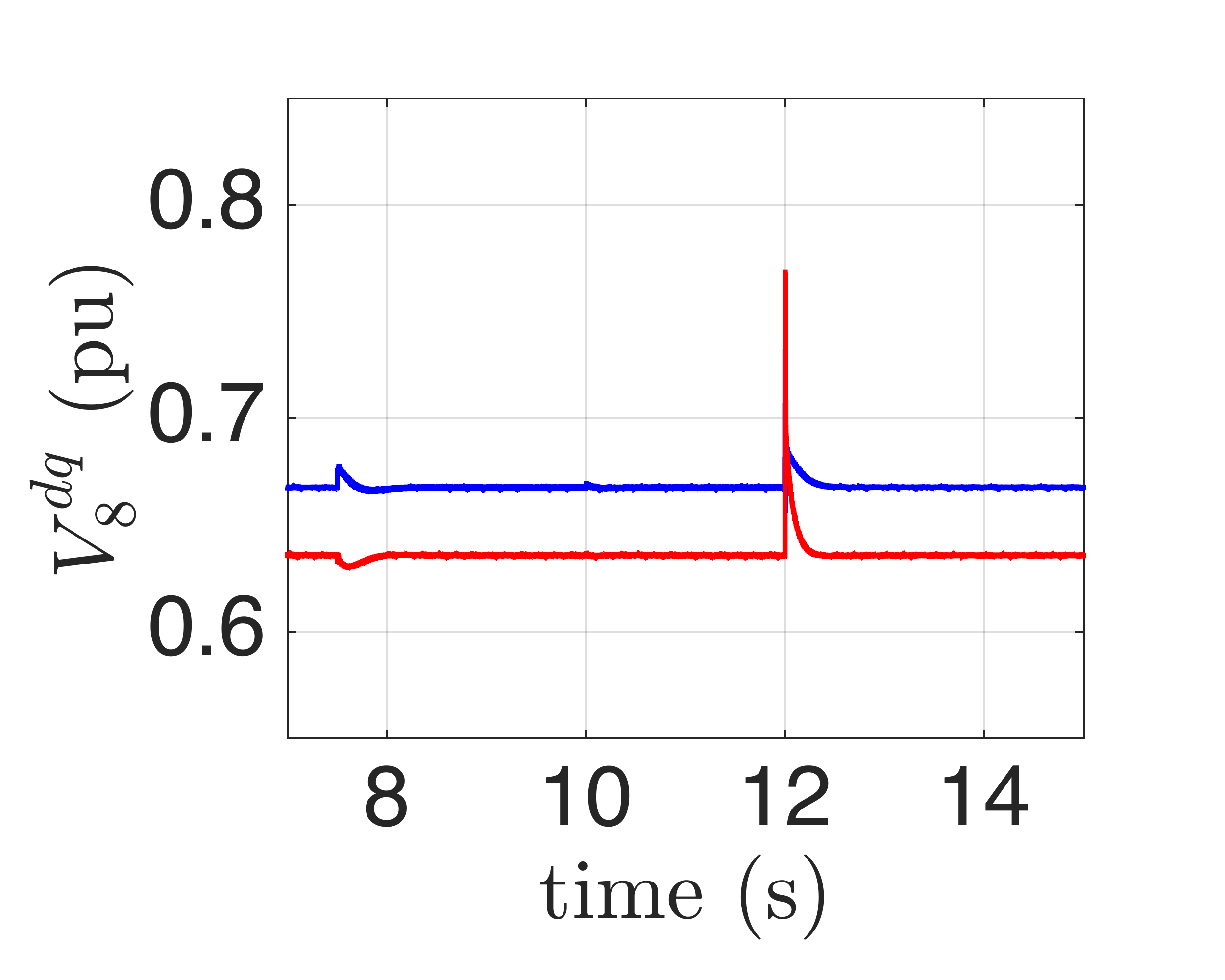}
  		\caption{\emph{d} (blue) and \emph{q} (red) components of the voltage at $PCC_8$.}
  		\label{fig:Vdq8ns}
  	\end{subfigure}
  	\begin{subfigure}[!htb]{0.23\textwidth}
  		\centering
  		\includegraphics[width=1.15\textwidth]{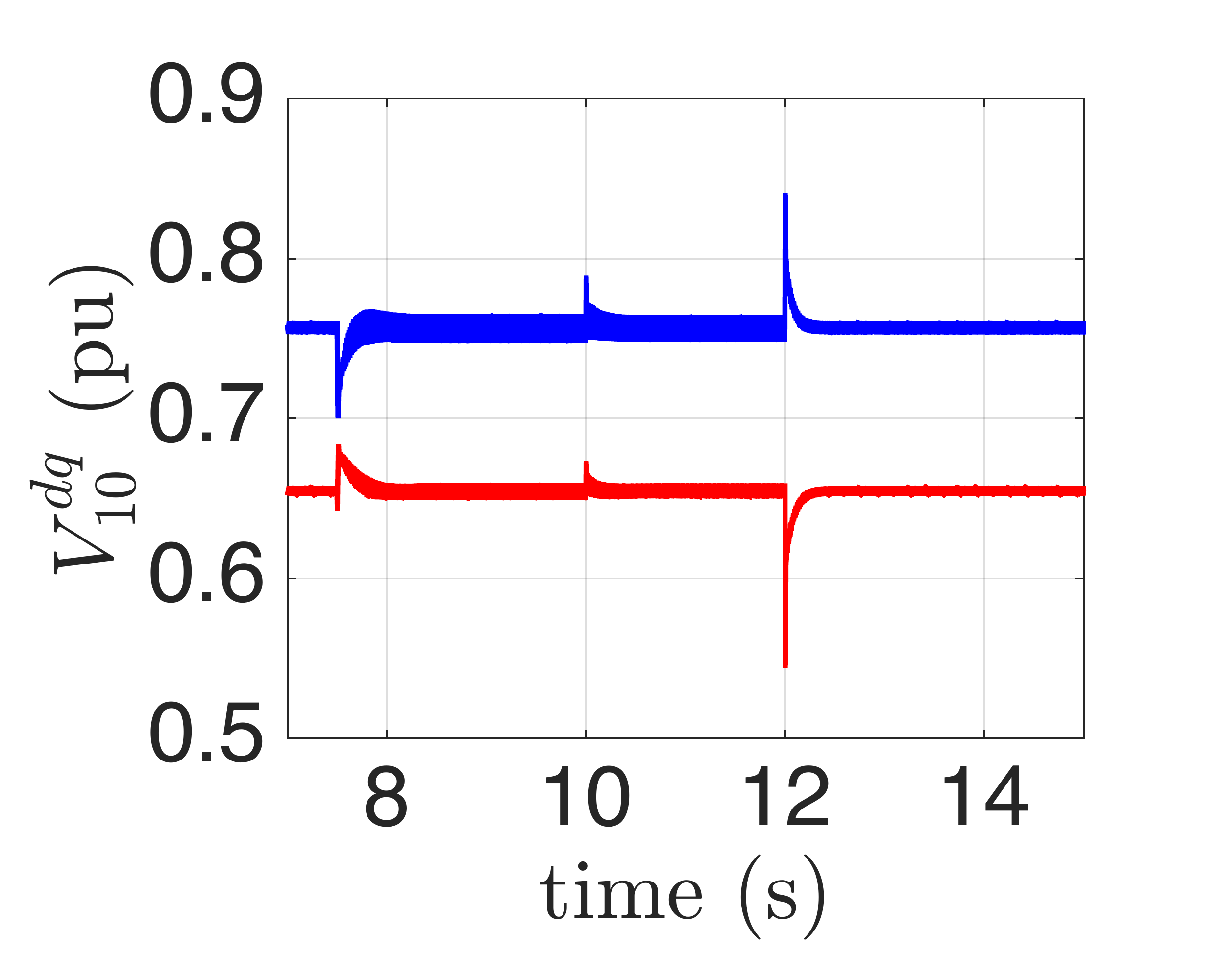}
  		\caption{\emph{d} (blue) and \emph{q} (red) components of the voltage at $PCC_{10}$.}
  		\label{fig:Vdq10ns}
  	\end{subfigure}\hspace{2mm}
  	\begin{subfigure}[!htb]{0.23\textwidth}
  		\centering
  		\includegraphics[width=1.15\textwidth]{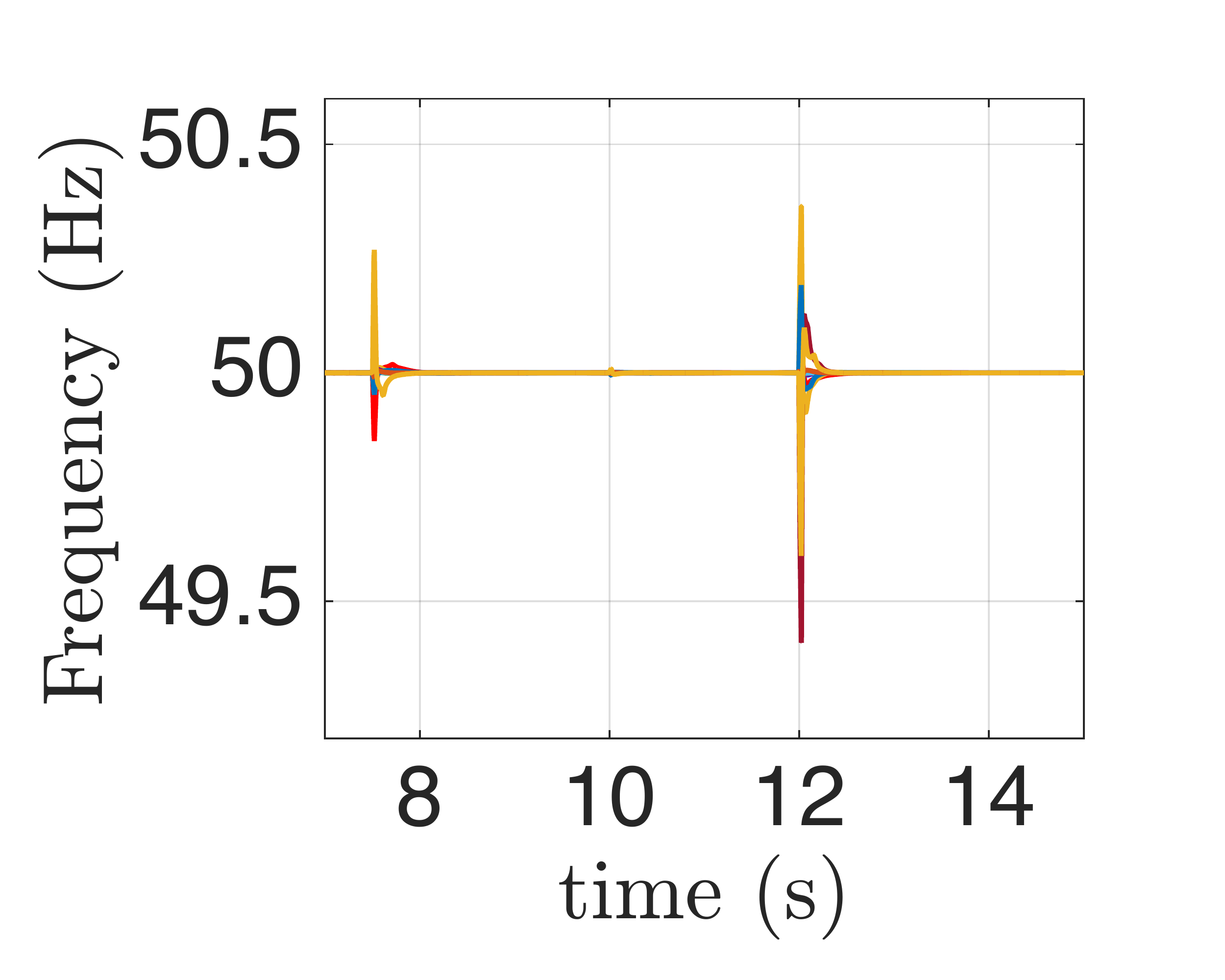}
  		\caption{Frequency of phase $a$ at the $PCC$s.}
  		\label{fig:freqns}
  	\end{subfigure}\hspace{2mm}
  	\begin{subfigure}[!htb]{0.23\textwidth}
  		\centering
  		\includegraphics[width=1.15\textwidth]{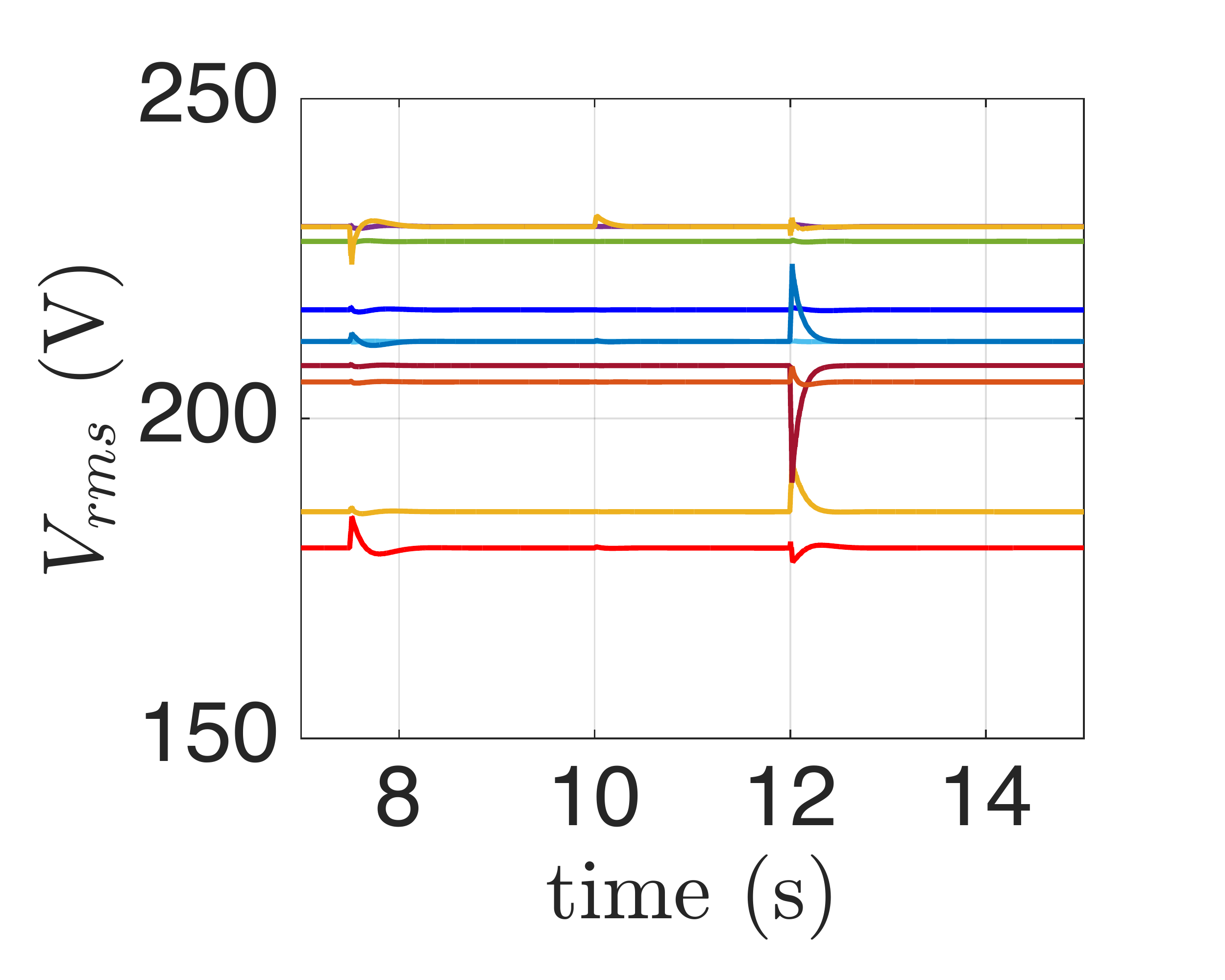}
  		\caption{RMS voltages of phase $a$ at the $PCC$s.}
  		\label{fig:Vrmsns}
  	\end{subfigure}\hspace{2mm}
  	\begin{subfigure}[!htb]{0.23\textwidth} 
  		\centering
  		\includegraphics[width=1.15\textwidth]{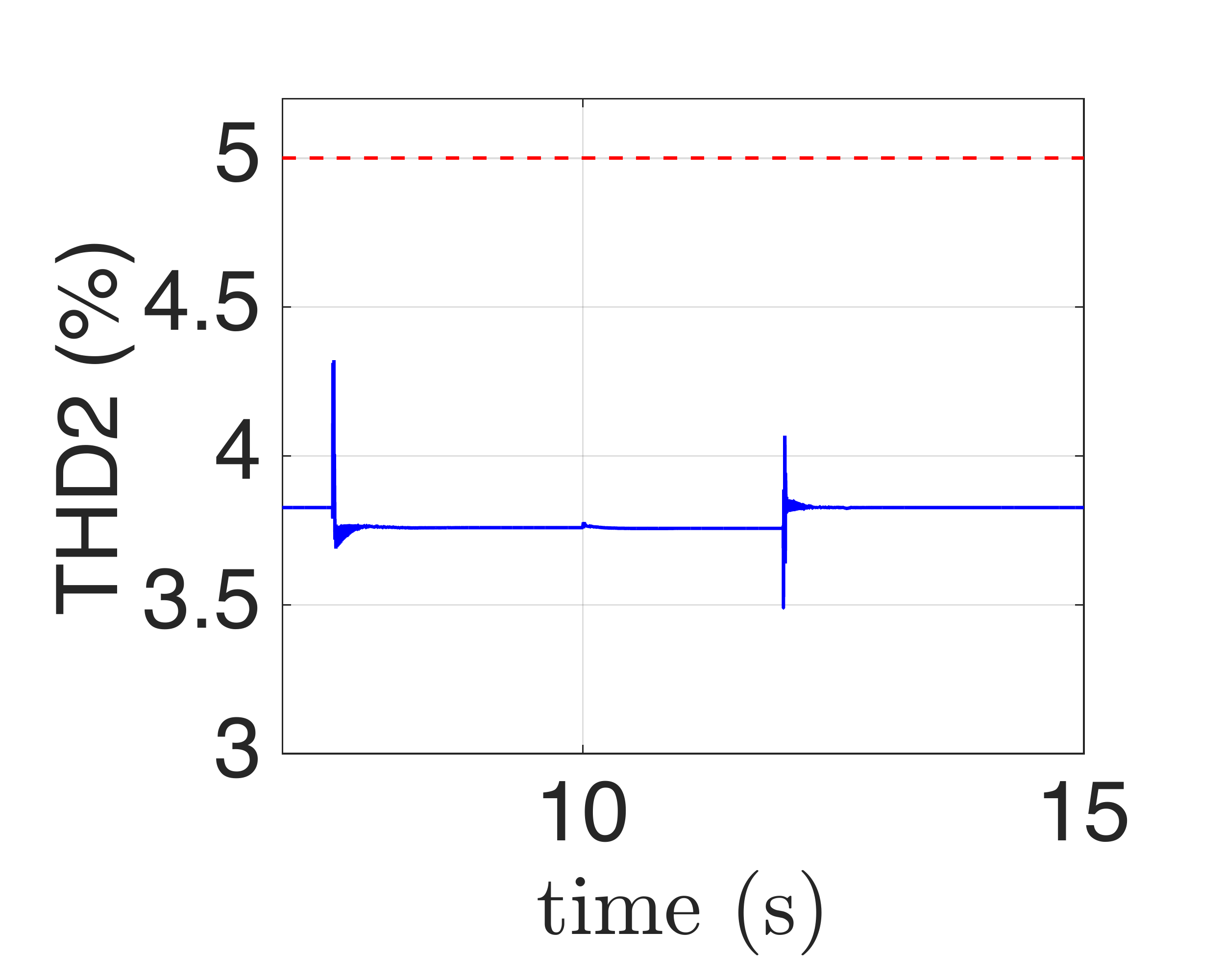}
  		\caption{THD of phase $a$ of the voltage at $PCC_2$.}
  		\label{fig:THD2ns}
  	\end{subfigure}
  	\caption{Performance of PnP voltage and
  		frequency control when significant phase shifts are simulated for the clocks of DGUs 2, 3, 7, 8 and 10. Connection of DGU 10, load change at PCC 10, and disconnection of DGUs 3 and 7, are performed at times $t =7.5$ s, $t = 10$ s and $t = 12$ s, respectively.}
  	\label{fig:overall_sim_nosynch}   
  \end{figure*}
  
  \section{Conclusions}
  In this paper, we presented a decentralized control approach to voltage
  and frequency stabilization in AC ImG. Differently from the PnP
  methodology in \cite{Riverso_TSG}, the presented procedure is always feasible and guarantees
  overall ImG stability while computing local
  controllers in a line-independent fashion. Future research will focus on studying how to couple
  PnP local regulators with a higher control layer for power flow regulation among DGUs.

  \appendix
  \section{Proof of Theorem \ref{thm:overall_stability}}
  Before showing the proof of Theorem \ref{thm:overall_stability},
  in the next  Proposition we provide preliminary results which are instrumental for
  characterizing the states $\mathbf{\hat x}$ yielding $\dot{\mathcal{V}}(\mathbf{\hat x})=0$.

  \begin{prp}
  	\label{pr:quadr_form}
  	Let Assumption \ref{ass:ctrl} hold, \eqref{eq:new_Lyapunov} be verified, $Y_i>0$, and $\tilde Q_i$ and $Q_i$ be given by 
  	\eqref{eq:Qi_tilde_struct} and \eqref{eq:Qtilde}, respectively. Moreover, consider
  	$g_i(w_i) = w_i^T Q_i w_i$, with $w_i\in\mathbb{R}^6$. Only vectors $\bar w_i$ in the form
  	\begin{equation*}
  	\label{eqn:prop2}
  	\bar{w}_i =\left[ \begin{array}{c|c|c}
  	\eta_i^{-1}\alpha_i^T&
  	\bar \sigma^{-1} \beta_i^T  &
  	(\mathcal{Y}_{33,i}\beta_i)^T
  	\end{array}\right]^T
  	\end{equation*}
  	with $\alpha_i , \beta_i\in\mathbb{R}^2$ fulfill
  	\begin{equation}
  	\label{eq:vQv=0}
  	g_i(\bar{w}_i) =0.
  	\end{equation}
  \end{prp}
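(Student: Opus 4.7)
The plan is to reduce the characterization of $\{w_i : g_i(w_i)=0\}$ to the characterization of the kernel of the quadratic form induced by $\tilde{Q}_i$, which has the simpler block structure in \eqref{eq:Qi_tilde_struct}. The first step is to invert the definition \eqref{eq:Qtilde}: since $Y_i$ is invertible with $Y_i^{-1}=P_i$, one obtains $Q_i=P_i\tilde{Q}_i P_i$, so that
\[
g_i(w_i)=w_i^T Q_i w_i=(P_i w_i)^T\tilde{Q}_i(P_i w_i).
\]
Writing $v=P_i w_i$ as $v=[v_1^T,v_2^T,v_3^T]^T$ with $v_k\in\mathbb{R}^2$, the structure \eqref{eq:Qi_tilde_struct} immediately gives $v^T\tilde{Q}_i v=v_2^T\tilde{\mathcal{Q}}_{22,i}v_2$.

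Next I would show that $\tilde{\mathcal{Q}}_{22,i}$ is \emph{strictly} negative definite, so that the identity above forces $v_2=0$. Since $Y_i>0$, the principal block $\mathcal{Y}_{22,i}$ is positive definite; together with $\Gamma_i>0$ this gives $\mathcal{Y}_{22,i}\Gamma_i^{-1}\mathcal{Y}_{22,i}>0$, and inequality \eqref{eq:new_Lyapunov} then yields $\tilde{\mathcal{Q}}_{22,i}<0$ strictly. Therefore $g_i(w_i)=0$ if and only if the middle $2$-block of $P_i w_i$ vanishes, which (since $P_i$ is invertible) is equivalent to $w_i=Y_i v$ with $v=[\alpha_i^T,0,\beta_i^T]^T$ for some $\alpha_i,\beta_i\in\mathbb{R}^2$.

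The last step is to expand $w_i=Y_i v$ using the block form \eqref{eq:Yi_param}: the first block becomes $\eta_i^{-1}\alpha_i$, the middle block is $\mathcal{Y}_{23,i}\beta_i$, and the last is $\mathcal{Y}_{33,i}\beta_i$. To match the form stated in the proposition, I would use that \eqref{eq:Qi_tilde_struct} forces $\tilde{\mathcal{Q}}_{13,i}=0$, which via \eqref{eq:cQ13_tilde} gives $\mathcal{Y}_{23,i}=(C_{ti}/\eta_i)I_2$; under Assumption~\ref{ass:equal_ratio} this simplifies to $\bar{\sigma}^{-1}I_2$, producing exactly the claimed expression for $\bar{w}_i$.

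The only delicate point is ensuring that $\tilde{\mathcal{Q}}_{22,i}$ is strictly, rather than merely weakly, negative definite: a weak definiteness would enlarge the zero set and invalidate the sharp characterization. The strict inequality coming from \eqref{eq:new_Lyapunov} together with $\Gamma_i>0$ and $Y_i>0$ is precisely what closes this gap, and I expect the rest of the argument to be a straightforward block-matrix computation.
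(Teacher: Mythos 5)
Your proposal is correct and follows essentially the same route as the paper: exploit $Q_i = P_i\tilde{Q}_i P_i$ (equivalently \eqref{eq:Qtilde}), the block structure \eqref{eq:Qi_tilde_struct}, the definiteness of $\tilde{\mathcal{Q}}_{22,i}$ guaranteed by \eqref{eq:new_Lyapunov}, and then map back through the block form \eqref{eq:Yi_param} with $\mathcal{Y}_{23,i}=(C_{ti}/\eta_i)I_2=\bar{\sigma}^{-1}I_2$. The only (harmless) difference is that the paper first argues that zeros of the negative semidefinite form must satisfy $Q_i\bar{w}_i=0$ and then uses nonsingularity of $\tilde{\mathcal{Q}}_{22,i}$, whereas you read $g_i(w_i)=v_2^T\tilde{\mathcal{Q}}_{22,i}v_2$ directly and use its strict negative definiteness, a minor streamlining of the same argument.
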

  \begin{proof}
  	For the sake of simplicity, hereafter we omit the subscript
  	$i$. 
  	The assumptions imply that $Q$ is
  	negative semidefinite. Then, vectors $\bar w$ satisfying \eqref{eq:vQv=0} also maximize
  	$g(\cdot)$, hence verifying 
  	\begin{equation}
  	\label{eq:maximum}
  	\frac{\mathrm{d}g}{\mathrm{d}w}(\bar w)
  	= 2Q\bar w=\mathbf{0}.
  	\end{equation}
  	%
  	From \eqref{eq:Qtilde} we have 
  	\begin{equation}
  	\label{eq:Qtilde_LaSalle}
  	Q = Y^{-1}\tilde{Q}Y^{-1}.
  	\end{equation}
  	By replacing \eqref{eq:Qtilde_LaSalle} in \eqref{eq:maximum} and pre-multiplying the obtained expression by $Y$, we get
  	\begin{equation}
  	\label{eq:maximum_tilde}
  	\tilde Q\underbrace{Y\bar w}_{\tilde w}=\mathbf{0}.
  	\end{equation} 
  	Set ${\tilde w} = \left[ \begin{array}{ccc}
  	\tilde{\alpha}^T  &	\tilde\beta^T & \tilde{\gamma}^T  	\end{array}\right]^T$, $\tilde{\alpha}$, $\tilde{\beta}$, $\tilde{\gamma}\in \Rset^2$. Proposition \ref{pr:Qtilde_semidef} can be applied and 
  	\eqref{eq:maximum_tilde} becomes \[
  	\tilde Q\tilde w = \left[ \begin{array}{ccc}
  	\mathbf{0}  &	(\tilde{\mathcal{Q}}_{22}\tilde\beta)^T & \mathbf{0} \end{array}\right]^T = \mathbf{0}. 
  	\] 
  	From \eqref{eq:new_Lyapunov}, $\tilde \QQ_{22}$ is nonsingular and
  	$\tilde{\mathcal{Q}}_{22}\tilde\beta = \mathbf{0}$ implies $\tilde\beta = \mathbf{0}$. Therefore, $\bar w = \left[ \begin{array}{ccc}
  	{\alpha}^T  &	\beta^T & {\gamma}^T  	\end{array}\right]^T$ verifies \eqref{eq:maximum} if and only if 
  	\[
  	Y^{-1}\left[ \begin{array}{ccc}{\alpha}^T &	\beta^T & {\gamma}^T \end{array}\right]^T = \left[ \begin{array}{ccc}
  	\tilde{\alpha}^T  &	\mathbf{0} & \tilde{\gamma}^T  	\end{array}\right]^T.
  	\]
  	By pre-multiplying both sides by $Y$ we obtain
  	\[
  	\left[\begin{array}{ccc}
  	\eta^{-1}I_2 & \mathbf{0}_2 & \mathbf{0}_2 \\
  	\mathbf{0}_2 & \mathcal{Y}_{22} & {\bar \sigma}^{-1}I_2\\
  	\mathbf{0}_2 & {\bar\sigma}^{-1}I_2 & \mathcal{Y}_{33} \\
  	\end{array}\right]\left[\begin{array}{c}
  	\tilde{\alpha}\\
  	\mathbf{0} \\
  	\tilde{\gamma}
  	\end{array}\right]  = \left[\begin{array}{c}
  	\eta^{-1}\tilde{\alpha}\\
  	{\bar\sigma}^{-1}\tilde{\gamma}\\
  	\mathcal{Y}_{33}\tilde{\gamma}
  	\end{array}\right].
  	\]
  \end{proof}
  
\subsection{Proof of Theorem \ref{thm:overall_stability}}
\begin{proof}
	Since system \eqref{eq:sysaugoverallclosed} is linear, we can neglect the input $\hat{\mathbf{d}}$ and focus on the stability of the origin. Any solution to Problem \ref{prbl:problem_3} verifies \eqref{eq:new_Lyapunov}, which implies that $\tilde Q_i$ in \eqref{eq:Qi_tilde_struct} is negative semidefinite. Moreover $Y_i>0$ and, from \eqref{eq:Qtilde}, the inequality \eqref{eq:Qi_semidef}  holds.
	
	From Proposition \ref{pr:semidefinite_abc}, we have that
	\eqref{eq:Lyapeqnoverall} is verified. Therefore, we aim to use the LaSalle
	invariance Theorem \cite{khalil2001nonlinear} to show that the origin
	of the ImG is attractive.
	Let us compute the set
	$R = \{\mathbf{x}\in\mathbb{R}^{6N} : \mathbf{x}^T \mathbf{Q}\mathbf{x}= 0 \}$, which,
	using \eqref{eq:Lyap_abc}, can be written as 
	\begin{equation}
	\label{eq:R}
	\begin{aligned}
	R &= \{\mathbf{x} : \mathbf{x}^T \left( (a)+(b)+(c)\right)\mathbf{x}
	= 0 \}\\
	&=\{\mathbf{x} : \mathbf{x}^T (a) \mathbf{x} +\mathbf{x}^T(b)\mathbf{x}+\mathbf{x}^T(c) \mathbf{x}
	= 0 \}\\
	&=\underbrace{\{\mathbf{x} : \mathbf{x}^T (a) \mathbf{x}
		=0\}}_{X_1}\cap
	\underbrace{\{\mathbf{x}:\mathbf{x}^T\left[(b)+(c)\right] \mathbf{x}
		=0\}}_{X_2}, 
	\end{aligned}
	\end{equation}
	and first focus on characterizing the vectors of set $X_1$. 
	Recall
	that $(a)=\text{diag}(Q_1, \dots, Q_N)$. From Problem \ref{prbl:problem_3}, all assumptions of Proposition \ref{pr:quadr_form} are verified and, by utilizing it,
	we have
	\begin{small}
		\begin{equation*}
		\begin{aligned}
		X_1 = \{\mathbf{x} : \mathbf{x} =&\left[  \text{ }\alpha_1^T\text{ }\beta_1^T
		\text{ } \gamma_1^T \text{ }|\text{ }\cdots \text{ }| \text{ }\alpha_N^T\text{ }\beta_N^T
		\text{ } \gamma_N^T \text{ }\right]^T, \\
		&\alpha_i,\beta_i,
		\gamma_i\in\mathbb{R}^2,
		\beta_i = {\bar \sigma}^{-1}\mathcal{Y}_{33,i}^{-1}\gamma_i\}.
		\end{aligned}
		\end{equation*}
	\end{small}
	Next, we focus on the elements of $X_2$. We have seen that
	the term $(b)+(c)$ is an "expansion" of the Laplacian matrix in 
	\eqref{eq:laplacian}, obtained by augmenting each $2\times
	2$ block $\Phi_{ij}$ of $\mathcal{L}$ with zero rows and
	columns, so as to retrieve blocks of dimension $6\times
	6$. It follows that, by construction, $X_2$ contains vectors in the form
	\begin{small}
		\begin{equation}
		\label{eq:X2_a}
		\mathbf{\tilde x} =\left[  \text{ }\mathbf{0}\text{ }\tilde x_{12}^T \text{ }\tilde
		x_{13}^T \text{ }|\text{ }\cdots \text{ }|\text{ }\mathbf{0}\text{ }\tilde
		x_{N2}^T \text{ }\tilde x_{N3}^T \text{ }\right]^T,\hspace{3mm}\tilde
		x_{i2}, \tilde x_{i3}\in\mathbb{R}^2,\forall i\in\mathcal{D}.
		\end{equation}
	\end{small}
	Moreover, since the kernel of the Laplacian matrix of a connected
	graph contains only vectors with identical entries
	\cite{godsil2001algebraic}, we also have that
	\begin{small}
		\begin{equation}
		\label{eq:X2_b}
		\{\mathbf{\bar{x}}=\left[  \text{ }\bar{x}^T\text{ }\mathbf{0} \text{ } \mathbf{0} \text{
		}|\text{ }\cdots \text{ }|\text{ }\bar{x}^T\text{ } \mathbf{0} \text{ } \mathbf{0} \text{
		}\right]^T,~\bar x \in \mathbb{R}^2\}\subset X_2.
		\end{equation}
	\end{small}
	By merging \eqref{eq:X2_a} and \eqref{eq:X2_b}, we obtain
	\begin{small}
		\begin{equation*}
		\begin{aligned}
		X_2 =\big\{\mathbf{x} : \mathbf{x} =& \left[  \text{ }\bar{x}^T\text{ }\tilde x_{12}^T
		\text{ } \tilde x_{13}^T \text{ }|\text{ }\cdots \text{ }|\text{ }\bar{x}^T\text{ }
		\tilde x_{N2}^T \text{ } \tilde x_{N3}^T\text{ }\right]^T,\\
		& \bar{x}, \tilde{x}_{i2},\tilde{x}_{i3}\in\mathbb{R}^2\big\},
		\end{aligned}
		\end{equation*}
	\end{small}
	and then, from \eqref{eq:R}, it follows
	\begin{small}
		\begin{equation}
		\label{eqn:Rset}
		\begin{aligned}
		R = \big\{\mathbf{x} : \mathbf{x} =&\left[  \text{ }\bar\alpha^T\text{ }\beta_1^T
		\text{ } \gamma_1^T \text{ }|\text{ }\cdots \text{ }| \text{ }\bar\alpha^T\text{ }\beta_N^T
		\text{ } \gamma_N^T \text{ }\right]^T, \\
		&\bar\alpha,\beta_i,
		\gamma_i\in\mathbb{R}^2,
		\beta_i = {\bar\sigma}^{-1}\mathcal{Y}_{33,i}^{-1}\gamma_i\big\}.
		\end{aligned}
		\end{equation}
	\end{small}
	For concluding the proof, we must show that the largest
	invariant set $M\subseteq R$ is the origin. To this purpose, we
	consider \eqref{eq:modelDGUgen-aug-closed}, include coupling terms
	$\hat\xi_{[i]}$ and neglect inputs. Then,
	we choose the initial
	state $\mathbf{\hat x}(0) = \left[ \hat x_1^T(0)|\dots|\hat
	x^T_N(0)\right]^T\in R$, where, according to \eqref{eqn:Rset}, $\hat x_i(0)  =\left[  \text{ }\bar{\alpha}^T\text{ }\beta_i^T
	\text{ } \gamma_i^T \text{ }\right]^T$, $i = 1, \dots, N$. Our aim
	is to find conditions on the
	elements of  $\mathbf{\hat x}(0)$ that must hold in order to guarantee
	$\mathbf{\dot{\hat{x}}}\in R$. Recalling \eqref{eq:modelDGUgen-aug-closed} and
	\eqref{eq:Fi}, we compute $\dot{\hat x}_i(0)$ as
	
	\begin{equation}
	\label{eqn:dot_x_hat}
	\begin{aligned}
	&F_i\hat
	x_i(0)+\sum\limits_{j\in\mathcal{N}_i}\underbrace{\hat A_{ij}\left(\hat
		x_j(0)-\hat x_i(0)\right)}_{=0}=\\
	&=\left[\begin{array}{ccc}
	\mathcal{F}_{11,i} & \mathcal{F}_{12,i} & \mathbf{0_2} \\ 
	\mathcal{F}_{21,i}  &   \mathcal{F}_{22,i}  &  \mathcal{F}_{23,i} \\
	-I_2 & \mathbf{0_2} & \mathbf{0_2} \\
	\end{array}\right]\left[\begin{array}{c}
	\bar\alpha\\
	\beta_i \\
	\gamma_i
	\end{array}\right]=\left[\begin{array}{c}
	\mathcal{F}_{11,i}\bar\alpha +  \mathcal{F}_{12,i}\beta_i\\
	\mathcal{F}_{21,i}\bar\alpha + \mathcal{F}_{22,i}\beta_i + \mathcal{F}_{23,i}\gamma_i\\
	-\bar{\alpha}
	\end{array}\right],
	\end{aligned}
	\end{equation}
	\normalsize
	In order to have $\dot{\hat{x}}_i (0)\in R$, it must hold, $\forall i,j\in\mathcal{D}$
	\begin{equation}
	\label{eq:F11_F12}
	\mathcal{F}_{11,i}\bar\alpha +  \mathcal{F}_{12,i}\beta_i =
	\mathcal{F}_{11,j}\bar\alpha +  \mathcal{F}_{12,j}\beta_j
	\end{equation}
	Since, from \eqref{eqn:Fi_subplots}, $\mathcal{F}_{11,i}$ 
	is independent of $i$ and nonsingular, and since $\mathcal{F}_{12,\star} = \frac{1}{C_{t\star}}I_2$, $\star\in\{i,j\}$, \eqref{eq:F11_F12} is equivalent to 
	\begin{equation*}
	\frac{1}{C_{ti}}\beta_i =\frac{1}{C_{tj}}\beta_j.
	\end{equation*}
	This means that there is $\bar{\beta}\in\mathbb{R}^2$, such that $C_{ti}\bar{\beta} =
	\beta_i$, $\forall i\in\mathcal{D}$. Moreover, $\dot{\hat{x}}_i (0)\in R$ implies
	the following relation between the two last subvectors in \eqref{eqn:dot_x_hat}

	\begin{equation}
	\label{eq:subvec}
	\mathcal{F}_{21,i}\bar\alpha+ \mathcal{F}_{22,i}({\bar\sigma}^{-1}\YY_{33,i}^{-1}\gamma_i)+\FF_{23,i}\gamma_i=-{\bar\sigma}^{-1}\YY_{33,i}^{-1}\bar \alpha.
	\end{equation}
	\normalsize
	Our next aim is to show that the previous equation is verified only for $\bar \alpha=0$. To this purpose we perform the following intermediate computations.
	\begin{itemize}
		\item By setting $K_i=\matrice{ccc}{\KK_{11,i} & \KK_{12,i} & \KK_{13,i}}$, $\KK_{1j,i}\in\Rset^{2\times 2}$, $j=1,2,3$, 
		from $G_i=K_i Y_i$, \eqref{eq:Yi_param} and \eqref{eq:Y23} one has
		
		\begin{equation}
		\label{eq:Gblocks}
		\matrice{c|c|c}{\GG_{11,i} & \GG_{12,i} & \GG_{13,i}}=\matrice{c|c|c}{\eta_i^{-1}\KK_{11,i} & \KK_{12,i} \YY_{22,i}+{\bar\sigma}^{-1}\KK_{13,i} & {\bar\sigma}^{-1}\KK_{12,i}+\KK_{13,i} \YY_{33,i}}. \nonumber
		\end{equation} 
		\normalsize
		\item From the expression of $\GG_{11,i}$ in \eqref{eq:G11} and the relation $\GG_{11,i}=\eta_i^{-1}\KK_{11,i}$ obtained in \eqref{eq:Gblocks} one has
		\begin{equation}
		\label{eq:K21}
		\KK_{11,i}=\bar \sigma L_{ti} \YY_{22,i}+I_2.
		\end{equation}
		By direct computation, from \eqref{eq:modelDGUgen-aug-closed} and \eqref{eq:Fi} one has $\FF_{21,i}=L_{ti}^{-1}(-I_2+\KK_{11,i})$. Using \eqref{eq:K21} one obtains
		\begin{equation}
		\label{eq:F21}
		\FF_{21,i}=\bar\sigma\YY_{22,i}.
		\end{equation}
		\item From the expression of $\GG_{13,i}$ in \eqref{eq:G13} and \eqref{eq:Gblocks}, one has $\bar\sigma^{-1}\KK_{12,i}+\KK_{13,i}\YY_{33,i}=-L_{ti}\bar\sigma^{-1}\hAAA_{22,i}$, which implies
		\begin{equation}
		\label{eq:K31rel}
		\bar\sigma^{-1}(\KK_{12,i}+L_{ti}\hAAA_{22,i})=-\KK_{13,i}\YY_{33,i}.
		\end{equation}
	\end{itemize}	
	
	Substituting \eqref{eq:F21} in \eqref{eq:subvec} one obtains
	\begin{align}
	\label{eq:alpha}
	&(\bar \sigma \YY_{22,i}+\bar\sigma^{-1}\YY_{33,i}^{-1})\bar\alpha=\LL_i\gamma_i, \\
	\label{eq:LL}
	&\LL_i=-\FF_{22,i}\bar\sigma^{-1} \YY_{33,i} ^{-1} -\FF_{23,i}.
	\end{align}
	By direct computation, from \eqref{eq:modelDGUgen-aug-closed} and \eqref{eq:Fi}  one has $\FF_{22,i}=\hAAA_{22,i}+L_{ti}^{-1}\KK_{12,i}$ and $\FF_{23,i}=L_{ti}^{-1}\KK_{13,i}$. Hence,
	
	\begin{align}
	\label{eq:LLeq0}
	\LL_i=-L_{ti}^{-1}(L_{ti}\hAAA_{22,i}+\KK_{12,i})\bar\sigma^{-1} \YY_{33,i} ^{-1} -L_{ti}^{-1}\KK_{13,i}=0,
	\end{align}
	\normalsize
	where the last equality follows from \eqref{eq:K31rel}. In \eqref{eq:alpha}, the matrix $(\bar \sigma \YY_{22,i}+\bar\sigma^{-1}\YY_{33,i}^{-1})$ is nonsingular because $Y_i>0$ implies $\YY_{22,i}>0$ and $\YY_{33,i}>0$. Hence, \eqref{eq:alpha} is verified only by $\bar\alpha=0$, which is the desired result.\\
	As a consequence, in order to have $\dot{\hat{x}}_i (0)\in R$, it must
	hold $\hat{x}_i (0) = \left[ \text{ }\mathbf{0}\text{ }\beta_i^T
	\text{ } \gamma_i^T \text{ }\right]^T$, with
	$\beta_i=\bar\sigma^{-1}\YY_{33,i}^{-1}\gamma_i$.
	Let 
	\begin{equation}
	\label{eqn:setS}
	S = \big\{\mathbf{x}:\mathbf{x} =\left[  \text{ }\mathbf{0}\text{ }(\bar\sigma^{-1}\YY_{33,1}^{-1}\gamma_1)^T
	\text{ } \gamma_1^T\text{ }|\text{ }\cdots \text{ }| \text{ }\mathbf{0}\text{ }(\bar\sigma^{-1}\YY_{33,N}^{-1}\gamma_N)^T\text{ } \gamma_N^T\right]
	\big\}.
	\end{equation}
	\normalsize
	Since $M\subseteq S$, we pick $\mathbf{\tilde{x}}(0)\in S$ and impose
	$\mathbf{\dot{\tilde{x}}}(0)\in S$. 
	Using \eqref{eqn:Fi_subplots}, this gives
	\begin{equation*}
	\dot{\tilde{x}}_i(0) = F_i\left[\begin{array}{c}
	\mathbf{0} \\
	\bar\sigma^{-1}\YY_{33,i}^{-1}\gamma_i\\
	\gamma_i
	\end{array}\right]=\left[\begin{array}{c}
	\bar\sigma^{-1}C_{ti}^{-1}\YY_{33,i}^{-1}\gamma_i \\
	-\LL_i\gamma_i\\
	\mathbf{0}
	\end{array}\right], \forall i \in\mathcal D.
	\end{equation*}
	Since, from \eqref{eq:LLeq0}, $\LL_i=0$, one has that $\dot{\tilde{x}}_i(0)\in S$ if and only if $\bar\sigma^{-1}C_{ti}^{-1}\YY_{33,i}^{-1}\gamma_i=0$, which is verified only for $\gamma_i=0$.
	
	From \eqref{eqn:setS} one has $S=\{\mathbf{0}\}$ and, since $M\subseteq S$, it holds $M =\{\mathbf{0}\}$. 
       \end{proof}
       \clearpage
  
  	   \bibliographystyle{IEEEtran}
  	\bibliography{PnP_AC_line_independent}
  \end{document}